\documentclass[11pt]{article}
\usepackage{mydef2col}
\usepackage{array}
\usepackage{bm}

\makeatletter
\def\blfootnote{\xdef\@thefnmark{}\@footnotetext}
\makeatother
\providecommand{\keywords}[1]{\blfootnote{\small \textbf{Keywords:} #1}}
\providecommand{\codes}[1]{\blfootnote{\small \textbf{2010 MSC: } #1}}

\textwidth=6.9in
\addtolength{\oddsidemargin}{0.2in}
\addtolength{\evensidemargin}{-0.2in}

\setcounter{theorem}{0}

\allowdisplaybreaks[4]
\graphicspath{{./Graphs/}}
\captionsetup{width=.8\textwidth}

\usepackage{tikz}
\usetikzlibrary{calc,decorations.markings}
\usetikzlibrary{arrows,patterns,positioning}

\usepackage[round]{natbib}

\def\calJ{{\cal J}}
\def\calB{{\cal B}}

\def\baru{{\bar{u}}}
\def\barU{{\bar{U}}}

\def\sp{\sqrt{p} }

\newcommand{\iu}{\mathrm{i}\mkern1mu}

\newcommand{\Ree}{\operatorname{Re}}

\newcommand{\Imm}{\operatorname{Im}}

\title{Semi-analytical pricing of barrier options in the time-dependent Heston model}

\author{
\authorstyle{
Peter Carr{}
\textsuperscript{1}
, Andrey Itkin{}
\textsuperscript{1}
and Dmitry Muravey
\textsuperscript{2}
}
\newline\newline
\textsuperscript{1}
\institution{Tandon School of Engineering, New York University, 1 Metro Tech Center, 10th floor, Brooklyn NY 11201, USA} \\
\textsuperscript{2}
\institution{Moscow State University, Moscow, Russia}
}

\vspace{-0.2in}
\date{\today}
\begin{document}

\maketitle
\vspace{-0.2in}

\lettrineabstract{We develop the general integral transforms (GIT) method for pricing barrier options in the time-dependent Heston model (also with a time-dependent barrier) where the option price is represented in a semi-analytical form as a two-dimensional integral. This integral depends on yet unknown function $\Phi(t,v)$ which is the gradient of the solution at the moving boundary $S = L(t)$ and solves a linear mixed Volterra-Fredholm equation of the second kind also derived in the paper. Thus, we generalize the one-dimensional GIT method, developed in (Itkin, Lipton, Muravey, Generalized integral transforms in mathematical finance, WS, 2021) and the corresponding papers, to the two-dimensional case. In other words, we show that the GIT method can be extended to stochastic volatility models (two drivers with inhomogeneous correlation). As such, this 2D approach naturally inherits all advantages of the corresponding 1D methods, in particular, their speed and accuracy. This result is new and has various applications not just in finance but also in physics. Numerical examples illustrate high speed and accuracy of the method as compared with the finite-difference approach.} \hspace{10pt}

\keywords{barrier options, stochastic volatility, Heston model, GIT method, semi-analytical solution, mixed linear Volterra-Fredholm equation, radial basis functions}

\codes{91G40; 91G60; 91G80; 47G30; 35R09; 65L12;}


\section*{Introduction} \label{layer}

The classical Heston model was introduced in \citep{Heston:1993a}. It immediately drew a lot of attention since the characteristic function (CF) of the log-spot price $x_t$ for this model can be found in closed form. Thus, the pricing of European options in this model becomes almost straightforward by using the well-known FFT methods, see, e.g., survey in \citep{Schmelzle2010}. The Heston model belongs to the class of stochastic volatility (SV) models and introduces an instantaneous variance $v_t$ as a mean-reverting square-root process correlated to the underlying stock price process $S_t$. The model is defined by the following stochastic differential equations (SDEs):
\begin{align} \label{heston}
dS_t &= S_t(r-q) dt + S_t\sqrt{v}dW^{(1)}_t, \\
d v_t &= \kappa (\theta - v_t)dt + \sigma\sqrt{v_t} dW^{(2)}_t, \nonumber \\
d \langle W^{(1)}_t, W^{(2)}_t \rangle &= \rho dt, \quad [S,t] \in [0,\infty) \times [0,\infty),
\quad S_0 = S, \quad v_0 = v, \nonumber
\end{align}
\noindent where $W^{(1)}$ and $W^{(2)}$ are two standard correlated Brownian motions with the constant correlation coefficient $\rho \in [-1,1]$, $\kappa$ is the rate of mean-reversion, $\sigma$ is the volatility of variance $v_t$ (vol-of-vol), $\theta$ is the mean-reversion level (the long-term run), $r$ is the interest rate and $q$ is the continuous dividend. All parameters in the original Heston model are assumed to be time independent. If the so-called Feller condition $2 \kappa \theta > \sigma^2$ is satisfied, the process $v_t$ is strictly positive, $v_t \in [0,\infty)$; otherwise its behavior at the origin should be additionally identified, see e.g., \citep{f54,CarrLinetsky2006,Lucic2008}.

Despite high popularity of the Heston model among both practitioners and researchers, later it was observed, \citep{Benhamou2010}, that using this model is still a challenge because the CF is known in closed form only when the parameters are constant or piecewise constant, \citep{Mikhailov2003,Guterding2018}. And the time dependence of the parameters is necessary to be able to calibrate the model to the term-structure of market data. Thus, for the time dependent parameters there is no an any analytical formula for the European option price, and one usually has to perform either a Monte Carlo simulation, \citep{AndersenHeston} and references therein, or use a finite-difference (FD) approach, \citep{Kluge2002,ItkinCarrBarrierR3} and references therein. To improve this, in \citep{Benhamou2010}
a small volatility of volatility expansion and Malliavin calculus techniques are used to derive an analytic approximation
for the price of vanilla options for any time dependent Heston model. A survey of various approaches to pricing options under the time-dependent Heston model can be found in \citep{Rouah2015}. The time-dependent correlation function was also considered in \citep{Teng2021}.

However, for exotic options, such as e.g., barrier options, not so many analytical results have been obtained even for the case of constant coefficients, where they are available in two basic cases. The first one is a zero drift and zero correlation case where the spot and instantaneous variance processes are uncorrelated. Therefore, conditional on the integrated variance the option price is given by a 1D formula for the corresponding barrier option which should be further integrated with the density of the integrated variance, see \citep{LiptonMcGhee2002, Lipton2001}.  Often, this density is known in closed form, but for some models only the Laplace transform of the density is known that brings additional complexity (as applied to the CIR variance process, see \citep{ContTankov, Belomestny2016}). Aside of technical problems, this approach potentially can be further applied to the time dependent SV model with no correlation and drift.

The other tractable case is when the model has a small parameter $\epsilon$, so the solution can be constructed asymptotically by using a series expansion in $\epsilon$, see \citep{LiptonMcGhee2002, Sircar2004, Kato2013, LiptonGal2014, Lorig2017} among others. For instance, in \citep{LiptonMcGhee2002} this is done by assuming that $v T \ll 1$. However, for the time-dependent model  construction of such semi-analytical solution could become problematic.

An attempt to make the next step has been done in a recent paper \citep{Aquino2019} where the barrier option price in the Heston model with constant coefficients has been presented in a semi-analytical form. The authors tried to  extend the approach of \citep{Griebsch2013} by using conditioning on the variance path and then employing the reflection properties of the Brownian motion. They claim that the stock price at maturity $S_T$ conditional on $v_T$ and the integrated variance $\bar{v} = \int_0^T v_s d s$ has a lognormal distribution, and then construct  the joint probability distribution (pdf) of the logarithmic spot price and its maximum/minimum by using the reflection principle. Finally, they derive a joint pdf for $v_T$ and $\bar{v}$ via a double inverse Fourier transform. Unfortunately, as was figured out by Prof. A. Lipton during our joint discussions, and later confirmed by the authors of  \citep{Aquino2019}, their derivation contains an error, and so their final result should be discarded.

Even if it had been correct,  the reflection principle would not have been valid for time-dependent barriers. Also, time-dependent coefficients of the model make it hard to derive both the joint pdf mentioned in above (for the joint pdf of $v_T$ and $\bar{v}$ perhaps, this is possible if only some coefficients are functions of time while the other are constant, e.g. if $\kappa, \rho$ are constants as in \citep{CarrSun}). Finally, as this will be seen below, numerical complexity of this approach is close to that proposed in this paper (computation of two-dimensional complex non-singular integrals).

Therefore, practitioners who need to price barrier options using the whole time-dependent Heston model with no simplifications yet have to use numerical methods. In this paper we develop an alternative approach to this problem by using the generalized integral transform method (GIT) originally developed in physics and then introduced into mathematical finance by the authors in \citep{CarrItkin2020jd, ItkinMuravey2020r, CarrItkinMuravey2020, ItkinMuraveyDBFMF} and also in cooperation with Alex Lipton in \citep{ItkinLiptonMuravey2020,ItkinLiptonMuraveyMulti,ItkinLiptonMuraveyBook}. To shorten the references, in what follows we cite just a recent book, \citep{ItkinLiptonMuraveyBook}, having in mind that the corresponding materials could also be found in the above referenced papers.

Despite our methods can be applied to any sort of barrier options, here, as an example, we consider only a Down-and-Out barrier Put option written on the underlying process $S_t \in [L(t), \infty]$, which follows the dynamics in \eqref{heston} with all the model coefficients $\kappa, \theta, \sigma, \rho$ being functions of the time $t$, and where $L(t) > 0$ is the lower barrier. We also discuss other types of the barrier options in Section~\ref{Discus}.

We assume that once $S_t$ hits the barrier, the contract is terminated and the option expires worthless, i.e.
\begin{equation} \label{bcH}
P(t,L(t),v) = 0,
\end{equation}
\noindent where $P(t,S,v)$ is the option price. In other words, in this case we assume no rebate is paid either at the option maturity $T$, or at hit. This assumption can be easily relaxed, see \citep{ItkinMuraveyDBFMF}. At the other boundary we assume the standard condition
\begin{equation} \label{bc0}
P(t,S,v)\Big|_{S \uparrow \infty} = 0.
\end{equation}

If the process $S_t$ survives till $t=T$, the Put option provides its holder with the payoff
\begin{equation} \label{tc}
P(T,S,v) = (K - S)^+,
\end{equation}
\noindent where $K > 0$ is the strike. The \eqref{tc} is the terminal condition for our problem. We also assume that $L(T) <  K$.

Our main result obtained in this paper is as follows. We develop the GIT method for pricing barrier options in the time-dependent Heston model (also with the time-dependent barrier) and derive a semi-analytical solution of this problem which is expressed via a two-dimensional integral. This integral depends on yet unknown function $\bar{\Phi}(t,v)$ which is the gradient of the solution at the moving boundary $S = L(t)$ and solves a linear mixed Volterra-Fredholm (LMVF) equation of the second kind also derived in the paper. Briefly speaking, we generalize a one-dimensional GIT method developed in \citep{ItkinMuravey2020r} to the two-dimensional case. Or, to say it differently, we show that the GIT method can be developed not only for one-factor models, but for the SV models (two drivers with inhomogeneous correlation) as well. As such, this 2D method naturally inherits all advantages of the corresponding 1D methods, in particular, their speed and accuracy. This result is new and has various applications not just in finance but in physics as well.

The rest of the paper is organized as follows. In Section~\ref{pricePDE} we consider a partial differential equation (PDE) for the price of Down-and-Out Put option and solve it by using generalization of our GIT method. In Section~\ref{solVolt} we discuss how the LMVF equation derived in Section~\ref{pricePDE} can be solved numerically. In doing so we use the method of Radial Basis Functions (RBF) but replace a Gaussian RBF with another one. This new basis function (BF) is actually not an RBF but mimics the Gaussian RBF and is positive-definite, hence can be used as an interpolation kernel. We prove all these properties of the new BF in Appendix~\ref{appPD}. The main idea of the new BF is that it makes the problem tractable by reducing the 3D integral in the LMVF equation to the 2D one. Section~\ref{numExp} describes results of our numerical experiments where the prices of barrier options are obtained by using the GIT method and then compared with those computed by using a finite difference (FD) approach. We show that our method outperforms the FD one in both accuracy and speed. Section~\ref{Discus} concludes and provides some additional comments about capability of the developed approach.

\section{The pricing PDE and its solution} \label{pricePDE}

Let us introduce a new variable $x = \log(S/K)$. By the standard argument, \citep{ContVolchkova2005}, under the risk neutral measure the Put option price $P(t,x,v)$ with $x,v$ being the initial values of processes $x_t, v_t$ at the time $t=0$  solves the partial differential equation (PDE)
\begin{align} \label{PDE1}
\fp{P}{t} &+ \dfrac{1}{2} v \sop{P}{x}  +  \left[r(t) - q(t) - \frac{1}{2} v\right] \fp{P}{x}
+ \dfrac{1}{2} \sigma^2(t) v \sop{P}{v} + \kappa(t) (\theta(t) - v)\fp{P}{v}  + \rho(t) \sigma(t) v \cp{P}{x}{v} = r(t) P,
\end{align}
\noindent subject to the terminal condition
\begin{equation} \label{tc1}
P(T,x,v) = K(1 - e^x)^+,
\end{equation}
\noindent and the boundary conditions
\begin{align} \label{bcH1}
P(t,y(t), v) &= 0, \quad y(t) = \log(L(t)/K) < 0, \\
P(t,x, v)\Big|_{x \uparrow \infty} &= 0. \nonumber
\end{align}

Following the idea of the method of generalized integral transforms (GIT) for $S \in [L(t), \infty)$, \citep{ItkinMuravey2020r}, we introduce the following integral transform
\begin{equation} \label{GITdef}
\baru(\tau,p,v) = \int_{y(t)}^\infty P(t,x,v) e^{-\sqrt{p}x} dx,
\end{equation}
\noindent where $p = a + \iu\omega$ is a complex number. It might look that we also need to request $\Ree(p) = \beta > 0$ for the transform to exist. However, usually the solution $u(t,x,v)$ converges to zero as $u(t,x,v) \propto e^{ - a x^2}, \ a > 0$, see e.g., \citep{ItkinMuravey2020r}, hence the integral in the RHS of \eqref{GITdef} is well-behaved.

Then, multiplying both parts of \eqref{PDE1} by $e^{-x\sqrt{p}}$ and integrating on $x$ from $y(t)$ to infinity, we obtain
\begin{align} \label{tr2}
0 &= \frac{\partial}{\partial t} \baru + P(t,y(t),v) e^{-\sqrt{p}y(t)} y'(t) + \left[r(t) - q(t) - \frac{1}{2}v\right] \left[e^{-x \sqrt{p}} P(t,x,v) \Big|_{x = y(t)}^{\infty} + \sqrt{p} \baru \right] \\
&+ \dfrac{1}{2} v \left( e^{-x \sqrt{p}} P_x(t,x,v) \Big|_{x = y(t)}^{\infty}
+ \sqrt{p} e^{-x \sqrt{p}} P(t,x,v) \Big|_{x = y(t)}^{\infty} + p \baru \right)  \nonumber \\
&+ \dfrac{1}{2} \sigma^2(t) v \sop{\baru}{v} + \kappa(t) (\theta(t) - v)\fp{\baru}{v} +  \rho(t) \sigma(t) v \sqrt{p} \fp{\baru}{v} + \rho(t) \sigma(t) v e^{-x \sqrt{p}} P_v(t,x,v) \Big|_{x = y(t)}^{\infty} - r(t) \baru. \nonumber
\end{align}
With allowance for the boundary conditions, \eqref{tr2} reduces to
\begin{align} \label{tr3}
0 &= \frac{\partial}{\partial t} \baru  + \left[r(t) - q(t) - \frac{1}{2}v\right] \sqrt{p} \baru + \dfrac{1}{2} v \left( -e^{-y(t) \sqrt{p}} P_x(t,y(t),v) + p \baru \right)  \\
&+ \dfrac{1}{2} \sigma^2(t) v \sop{\baru}{v} + \kappa(t) (\theta(t) - v)\fp{\baru}{v} +  \rho(t) \sigma(t) v \sqrt{p} \fp{\baru}{v} - \rho(t) \sigma(t) v e^{- y(t) \sqrt{p}} P_v(t, y(t),v) - r(t) \baru, \nonumber
\end{align}
\noindent or, after some algebra,
\begin{align} \label{PDEv}
0 = \frac{\partial}{\partial t} \baru  &+ [a(t,p) + c(p) v ]\baru(t,v)  + \dfrac{1}{2} \sigma^2(t) v \sop{\baru}{v} + \bar{\kappa}(t,p)(\bar{\theta}(t,p) - v) \fp{\baru}{v} - v e^{-y(t) \sqrt{p}} \Phi(t,v), \\
a(t,p) &= r(t)(\sqrt{p} -1)  - q(t)\sqrt{p}, \quad c(p) = \frac{1}{2}(p - \sqrt{p}), \quad
\bar{\kappa}(t,p) = \kappa(t) - \rho(t) \sigma(t) \sqrt{p}, \nonumber \\
\bar{\theta}(t,p) &= \theta(t) \frac{\kappa(t)}{\bar{\kappa}(t,p)}, \quad
\Phi(t,v) =  \dfrac{1}{2} P_x(t,y(t),v) + \rho(t) \sigma(t) P_v(t, y(t),v), \nonumber \\
\baru(T, p) &= K \left[ \frac{ e^{-y(T) \sqrt{p}} - 1}{\sqrt{p}} - \frac{e^{-y(T) \left(\sqrt{p}-1\right)} - 1}{\sqrt{p}-1} \right].  \nonumber
\end{align}

Assuming that the function $u(t,x,v)$ is smooth enough at the boundary $x = y(t)$, it follows that
\begin{equation}
\lim_{x \to y(t)} P_v(t,x,v) = \partial_v \lim_{x \to y(t)} P(t,x,v) = \partial_v P(t,y(t),v) = 0,
\end{equation}
\noindent and, hence, the second term in the definition of $\Phi(t,v)$ vanishes.

\subsection{Solution of \eqref{PDEv}}

The \eqref{PDEv} is an inhomogeneous PDE and also exponentially affine in the variable $v$. Its solution can be constructed if the Green's function of the homogeneous PDE is known. It can be observed that a similar homogeneous PDE is considered in \citep{CarrItkinMuravey2020} with respect to pricing barrier options under the CIR model. Therefore, we can proceed in the same way.
\begin{proposition} \label{prop2}
The \eqref{PDEv} can be transformed to the form
\begin{equation} \label{Bess1}
\fp{\barU}{\tau} = \frac{1}{2} \sop{\barU}{z} + \frac{b}{z} \fp{\barU}{z} + \bar{\Psi}(\tau, z),
\end{equation}
\noindent where $b$ is some constant, $\barU = \barU(\tau, z)$ is the new dependent variable, and $(\tau, z)
\in [0,\infty) \times [0,\infty)$ are the new independent variables, if
\begin{equation} \label{cond1}
\frac{\kappa(t) \theta(t)}{\sigma^2(t)} = \frac{m}{2},
\end{equation}
\noindent where $m \in [0,\infty)$ is some constant. The homogeneous version of \eqref{Bess1} is the PDE associated with the one-dimensional Bessel process, \citep{RevuzYor1999}
\begin{equation} \label{BesProc}
d X_t = d W_t  + \frac{b}{X_t} dt.
\end{equation}

\begin{proof}[{\bf Proof}] First make a change of variables
\begin{alignat}{2} \label{trCIR1}
\baru(t,v) &= \barU(t,z) e^{\alpha(t,p) v + \beta(t,p)}, \qquad & z &= g(t,p) \sqrt{v}, \\
g(t,p) &=  \exp\left[ \frac{1}{2} \int_0^t \left( \bar{\kappa}(s,p) - \alpha(s,p) \sigma^2(s) \right) \, ds \right], \qquad & \beta(t,p) &= \int _t^T [a(s,p) + \kappa(s) \theta(s) \alpha (s,p)] ds, \nonumber
\end{alignat}
\noindent where $\alpha(t,p)$ solves the Riccati equation
\begin{equation} \label{ric2}
\alpha'(t,p) = - c(p) + \bar{\kappa}(t,p) \alpha (t,p)  - \frac{1}{2}\alpha(t,p)^2 \sigma (t)^2.
\end{equation}

In new variables the PDE \eqref{PDEv} reads
\begin{align} \label{PDE12}
\frac{4 \bar{\kappa}(t,p) \bar{\theta}(t,p) - \sigma^2(t)}{2 z} \fp{\barU}{z} &+ \frac{1}{2} \sigma (t)^2 \sop{\barU}{z} + \frac{4}{g^2(t,p)}\fp{\barU}{t} + \Phi_1(t,z,p) = 0, \\
\Phi_1(t,z,p) &= - \left[\frac{4 v}{g^2(t,p)} e^{-y(t) \sqrt{p}}  e^{-[\beta (t,p) + v \alpha (t,p)]} \Phi(t,v)\right]_{v \to z^2/g(t,p)^2}. \nonumber
\end{align}
Next, by introducing the backward time $\tau$
\begin{equation} \label{tauTr}
\tau(t,p) = \frac{1}{4} \int_t^T  g^2(s,p) \sigma^2(s) \, ds,
\end{equation}
\noindent we convert \eqref{PDEv} to the form
\begin{align} \label{PDE3}
\fp{\barU}{\tau} &= \left[2 \frac{\bar{\kappa}(t) \bar{\theta}(t)}{\sigma^2(t)} - \frac{1}{2}\right] \frac{1}{z} \fp{\barU}{z} + \frac{1}{2}\sop{\barU}{z} + \Psi(t,z,p), \quad t = t(\tau), \quad \Psi(t,z,p) = \frac{\Phi_1(t,z,p)}{\sigma^2(t)}.
\end{align}

The function $t(\tau,p)$ is the inverse map of \eqref{tauTr}. Given the value of $p$, this map can be computed for any $\tau \ge 0$ by using the definition in \eqref{tauTr} and then inverting.

Finally, using the assumption in \eqref{cond1} we get
\begin{equation*}
\frac{\bar{\kappa}(t) \bar{\theta}(t)}{\sigma^2(t)} = \frac{\kappa(t) \theta(t)}{\sigma^2(t)} = \frac{m}{2},
\end{equation*}
\noindent and set the constant $b = m - 1/2$. Hence, the proposition is proved.
\end{proof}
\end{proposition}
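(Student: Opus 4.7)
The plan is to realize \eqref{PDEv} as a Bessel-type PDE through two successive changes of variable, following the same blueprint used in the CIR case treated in the earlier reference. The first change is an exponential-affine ansatz in $v$ together with a square-root spatial rescaling,
\[
\baru(t,v) = \barU(t,z)\,e^{\alpha(t,p)\,v + \beta(t,p)}, \qquad z = g(t,p)\sqrt{v}.
\]
The three unknown scalar functions $\alpha(\cdot,p)$, $\beta(\cdot,p)$, $g(\cdot,p)$ act as three degrees of freedom, which I will use to eliminate, respectively, the coefficient of $v\barU$, the coefficient of $\barU$, and the $v$-dependent part of the drift multiplying $\barU_z$, so that only a $1/z$ drift survives.

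After substituting the ansatz into \eqref{PDEv} and collecting terms by their $v$-structure, matching the coefficient of $v\barU$ to zero forces $\alpha$ to satisfy the Riccati equation \eqref{ric2}; matching the $v$-independent coefficient of $\barU$ determines $\beta$ as the definite integral in \eqref{trCIR1}; and matching the coefficient of $\barU_z$ that is $v$-linear (equivalently, $z$-proportional rather than $1/z$) forces $g$ to satisfy a first-order linear ODE whose solution is the formula in \eqref{trCIR1}. The resulting PDE is \eqref{PDE12}, in which the only time-dependence left in the diffusion is an overall scalar $\sigma^2(t)$ multiplied by $g^2(t,p)/4$.

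The second change, the backward time transformation \eqref{tauTr}, absorbs this scalar and normalizes the diffusion term to $\tfrac{1}{2}\barU_{zz}$, yielding \eqref{PDE3}. At this stage the coefficient of $z^{-1}\barU_z$ is $2\bar\kappa(t)\bar\theta(t)/\sigma^2(t) - \tfrac{1}{2}$, which is still apparently time-dependent. A direct computation from the definitions $\bar\kappa = \kappa - \rho\sigma\sqrt p$ and $\bar\theta = \theta\,\kappa/\bar\kappa$ shows $\bar\kappa\,\bar\theta = \kappa\theta$, so this coefficient equals $2\kappa(t)\theta(t)/\sigma^2(t) - \tfrac{1}{2}$. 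Demanding that this be constant, and calling the resulting constant $b = m - \tfrac{1}{2}$, is exactly the hypothesis \eqref{cond1}, which then completes the reduction to \eqref{Bess1}.

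The main obstacle I anticipate is the bookkeeping in the first step: the cross-term $\rho(t)\sigma(t)\,v\sqrt p\,\partial_v\baru$ and the $v$-quadratic second-order piece $\tfrac{1}{2}\sigma^2(t)\,v\,\partial_{vv}\baru$ each generate several contributions under the ansatz, and it is not a priori clear that these contributions conspire to yield an ODE system for $(\alpha,\beta,g)$ that decouples. The key observation that makes the plan go through is that the Riccati ODE for $\alpha$ involves only $c(p)$, $\bar\kappa$, and $\sigma^2$ and is independent of $\beta$ and $g$; once $\alpha$ is obtained, both $g$ and $\beta$ reduce to quadratures, and the identity $\bar\kappa\,\bar\theta = \kappa\theta$ then shows that the $v$-independent Feller-type condition \eqref{cond1} is the minimal requirement that makes the drift in the final Bessel PDE constant.
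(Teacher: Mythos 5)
Your proposal reproduces the paper's argument faithfully: the same exponential-affine ansatz $\baru = \barU e^{\alpha v + \beta}$ with $z = g(t,p)\sqrt{v}$, the same derivation of the Riccati equation for $\alpha$, the quadrature for $\beta$, and the linear ODE for $g$, the same backward-time substitution \eqref{tauTr}, and the same use of the identity $\bar{\kappa}\bar{\theta}=\kappa\theta$ together with \eqref{cond1} to produce the constant $b=m-\tfrac12$. One small terminology slip: the piece of the $\barU_z$ coefficient that $g$ eliminates is proportional to $\sqrt{v}$ (equivalently, to $z$), not linear in $v$, but your parenthetical ``$z$-proportional rather than $1/z$'' makes clear you meant the right term, so the argument stands.
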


As this is mentioned in \citep{CarrItkinMuravey2020} and follows from Proposition~\ref{prop2}, for the Heston model the transformation from \eqref{PDEv} to \eqref{PDE3} cannot be done unconditionally. However, even with the restriction in \eqref{cond1} the model still makes sense. Indeed, the model parameters already contain the independent mean-reversion rate $\kappa(t)$ and vol-of-vol $\sigma(t)$.  Since $m$ is an arbitrary constant, it could be calibrated to the market data together with $\kappa(t)$ and $\sigma(t)$. Therefore, even in this form the Heston model should be capable to be calibrated to the market option prices.

The terminal condition in \eqref{PDEv} doesn't depend on $v$ which means
\begin{equation}
\fp{}{v}\left[\barU(0,z) e^{\alpha(T,p) v + \beta(T,p)}\right] = e^{\alpha(T,p) v+ \beta(T,p)} \left[\fp{\barU(0,z)}{z} \frac{g^2(T,p)}{2 z}  + \barU(0,z) \alpha(T,p) \right] = 0,
\end{equation}
\noindent or
\begin{equation} \label{tcU}
\barU(0,z) = \baru(T,p) e^{- B z^2}, \qquad B  = \alpha(T,p)/g^2(T,p).
\end{equation}
\noindent where $\baru(T,p)$ has been determined in \eqref{PDEv}.

Since \eqref{Bess1} is an inhomogeneous PDE, it can be solved if the Green's function of the corresponding homogeneous PDE is known. Since this homogeneous counterpart with $\Psi(\tau, z,p) = 0$ is the Bessel equation defined at the semi-infinite domain $z \in [0,\infty)$, this Green's function is known in closed form assumed that the Bessel process stops when it reaches the origin. In more detail, it is relatively easy to show that the boundary $v_t = 0$ is an attainable regular boundary by Feller's classification, \citep{Lipton2001}. Therefore, similar to \citep{GorovoiLinetsky} we always make regular boundaries instantaneously reflecting, and include  regular reflecting boundaries into the state space.  We also assume that infinite boundaries are unattainable..

Since by definition $ m \in [0,\infty)$, one has to consider two cases determined by the famous Feller's condition. If the Feller condition is satisfied and the process never hits the origin (which means that $m \ge 1$), by the definition of $b$ in Proposition~\ref{prop2} this implies $b \ge 1/2$. It is known, \citep{Lawler2018NotesOT,LinetskyMendozza2010},  that in case $b \ge 1/2$ the density $G(\tau, z,\zeta)$ is a good density with no defect of mass, i.e., it integrates into 1. The explicit representation reads, \citep{c75, Emanuel:1982}
\begin{equation} \label{Green}
G(\tau, z,\zeta) = \frac{\sqrt{z \zeta}}{\tau} \left(\frac{\zeta}{z}\right)^b  e^{- \frac{z^2 + \zeta^2}{2 \tau}} I_{b-1/2} \left( \frac{z \zeta}{\tau}\right).
\end{equation}
Here $I_\nu(x)$ is the modified Bessel function of the first kind, \citep{as64}.

Otherwise, if $0 < m < 1$ this implies $-1/2 < b < 1/2$. Then by another change of variables, \citep{Polyanin2002}
\begin{equation*}
\barU(\tau,z) = z^{2(1-m)} \barU_1(\tau,z),
\end{equation*}
\noindent the \eqref{PDE3} transforms to the same equation with respect to $\barU_1(\tau,z)$ but now with $b = (3-2m)/2$. Accordingly, since $0 < m < 1$ we have $b > 1/2$, Therefore, again the Green's function is represented by \eqref{Green}.

Since the Green's function of the homogeneous form of \eqref{Bess1} is known, the solution of \eqref{PDE3} can be represented as, \citep{Polyanin2002}
\begin{align} \label{sol}
\barU(\tau,z) &= \int_0^\infty \barU(0,\zeta) G(\tau, z, \zeta) d\zeta + \int_0^\tau \int_0^\infty G(\tau-s, z, \zeta) \Psi(s,\zeta) ds \, d\zeta.
\end{align}
Using the definition of $\barU(0,z)$ in \eqref{tcU}, the first integral can be computed in closed form
\begin{align} \label{int1}
I_1(\tau,z,p) \equiv \int_0^\infty \barU(0,\zeta) G(\tau, z, \zeta) d\zeta &= K \left[ \frac{ e^{-y(T) \sqrt{p}} - 1}{\sqrt{p}} - \frac{e^{-y(T) \left(\sqrt{p}-1\right)} - 1}{\sqrt{p}-1} \right] \frac{e^{-\frac{B z^2}{2 B \tau +1}}}{(2 B \tau +1)^{b+\frac{1}{2}}},
\end{align}

Returning back to the original variables $(t,v)$ (in more detail see Appendix ~\ref{app2}) we obtain from \eqref{sol}
\begin{align} \label{sol1}
\baru(t,v,p) &= e^{\alpha(t,p) v + \beta(t,p)} I_1(t,v,p) \\
&+ \frac{1}{2} e^{\alpha(t,p) v + \beta(t,p)} \int_t^T \int_0^\infty \sqrt{v'} g(s,p) \Bigg\{ G\left(\int_t^s \frac{1}{4} g^2(\gamma, p) \sigma^2(\gamma) d\gamma, g(t,p) \sqrt{v}, g(s, p) \sqrt{v'} \right) \nonumber \\
&\times e^{-y(s)\sqrt{p} - \left[\beta(s, p) + v'\alpha(s,p)\right]} 	\Phi\left(s, v' \right) \Bigg\}ds \, dv'. \nonumber
\end{align}

\subsection{Solution of the Riccati equation \eqref{ric2}} \label{secLMVF}

In case the model coefficients are time-homogeneous, i.e. $\bar{\kappa}(t,\sp) = \bar{\kappa}(\sp), \ \sigma(t) = \sigma$, \eqref{ric2} subject to the terminal condition $\alpha(T,p) = \alpha_T$ can be solved analytically. The solution reads
\begin{align} \label{solRicConst}
\alpha(t,\xi) &= \frac{1}{\sigma^2} \Bigg\{\bar{\kappa}(\sp) + \sqrt{2 c((\sp)) \sigma^2 - \bar{\kappa}^2(\sp)} \\ &\times \tan \left[\tan^{-1}\left(\frac{\alpha_T  \sigma^2 - \bar{\kappa}(\sp)} {\sqrt{2 c(\sp) \sigma^2 - \bar{\kappa}^2(\sp)}}\right) + \frac{1}{2} (T-t) \sqrt{2 c(\sp) \sigma^2 - \bar{\kappa}^2(\sp)}\right] \Bigg\}  \nonumber \\
&= \frac{ C(t,p) \alpha(T,\xi)  + \left[2 c(\sp)  - \bar{\kappa}(\sp) \alpha(T,\xi) \right] \tan \left(C(t,p) (T-t)/2\right) }{C(t,p)  + \left[\bar{\kappa}(\sp) - \alpha(T,\xi) \sigma^2 \right] \tan \left(C(t,p) (T-t)/2\right)},
\quad C(t,p) = \sqrt{2 c(\sp) \sigma^2 - \bar{\kappa}^2(\sp)}, \nonumber
\end{align}
\noindent and $c(\sp), \bar{\kappa}(\sp)$ are defined in \eqref{PDEv}. We remind that, as shown in Appendix~\ref{app1}, a good terminal condition is $\alpha_T = 0$.

If the model coefficients are functions of the time $t$, we can use the method of \citep{Guterding2018}. The idea is to split the entire time interval $t \in [0,T]$ into $N$ subintervals of the length $\Delta t = T/N$, and approximate time-dependent model parameters by piecewise constant coefficients. Then at every interval $i$ we have $\kappa (t) = \kappa_i, \ \sigma(t) = \sigma_i, \ i=1,\ldots,N$. Accordingly, the solution $\alpha_i(t,p)$ for every interval $i$ is given by \eqref{solRicConst} where the solution from the previous time interval $\alpha_{i-1}(t,p)$ is taken as the terminal condition. Therefore, we solve \eqref{ric2} backward in time starting with $\alpha(T,p) = 0$. As shown  in \citep{Guterding2018}, this procedure is fully analytic and very fast. Indeed, on an uniform grid we need to compute function $\tan \left(C(t,p) (T-t)/2\right)$ only once while computationally this is the most expensive operation.

In the next Section we construct an inverse transform of \eqref{sol1} by using complex analysis. In a complex plane the function $C(t,p)$, as it is defined in \eqref{solRicConst}, is a multivalued function which can easily be seen if we represent it in the form
\begin{align} \label{funcC}
C(t,p) &= \sqrt{(\sqrt{p} - p_+)(\sqrt{p} - p_-)}, \qquad
p_\pm = \frac{\sigma - 2 \bar{\kappa} \rho \pm \sqrt{4 \bar{\kappa}^2 - 4 \bar{\kappa} \rho \sigma + \sigma^2}}{2 \left(1 - \rho ^2\right) \sigma }.
\end{align}
Thus, both branch (critical) points $p_\pm$ are pure real.

\subsection{Inversion}

Since functions $\sin[\xi(x- y(t))]$ form an orthonormal basis in  $[y(\tau), \infty)$ we can look for the solution $u(t, x, v)$ in the following form
\begin{equation} \label{invTr}
	P(t, x, v) = \int_0^\infty \chi(\xi, t, v) \sin[\xi(x- y(t))] d\xi,
\end{equation}
\noindent where $\chi(\xi, t, v)$ are some weights to be determined. Note, that this definition automatically respects the vanishing boundary conditions for $P(t, x, v)$. For $x = y(t)$ this is obvious. For $x \to \infty$ this can be seen looking at the final solution of a similar problem which is obtained in \citep{ItkinMuravey2020r}. We assume that the integral in \eqref{invTr} converges absolutely and uniformly $\forall x \in [y(t), \infty)$ for any $t > 0$ and $v > 0$.

Applying \eqref{GITdef} to both parts of \eqref{invTr} and integrating yields
\begin{align}
	\bar{u}(t, v, p) &= \int_{y(t)}^\infty e^{-\sqrt{p}x} \int_0^\infty \chi(\xi, t, v) \sin(\xi(x- y(t))) d\xi dx
	= e^{-\sqrt{p} y(t)} \int_0^\infty  \chi(\xi, t, v) \frac{ \xi  d\xi}{\xi^2 + p}.
\end{align}
or
\begin{equation} \label{InvEq}
	\int_0^\infty   \chi(\xi, \tau, v) \frac{ \xi  d\xi}{\xi^2 + p} = 	\bar{u}(t, v, p) e^{\sqrt{p} y(t)}.
\end{equation}
Now, similar to a standard construction of inverse operators, e.g., the inverse Laplace transform, we need an analytic continuation of the transform parameter $p$ into the  complex plane. Let us integrate both sides of \eqref{InvEq} on $p$ along the so-called keyhole contour presented in Fig.~\ref{contour}, \citep{ItkinMuravey2020r}.
\begin{figure}[!htb]
	\hspace{-0.5in}
	\captionsetup{width=0.8\linewidth}	
	\begin{center}
		\fbox{
\scalebox{0.8}{
			\begin{tikzpicture}
				\def\bigradius{4}
				\def\axisradius{4}
				\def\gammaradius{0}
				\def\omegaradius{0}
				\def\littleradius{0.4}
				\def\xii{15}
				\def\ang{10}
				\def\sh{0.07}
				\def\pole{3}
				
				\draw (-1.1*\axisradius, 0) -- (1.1*\axisradius,0)
				(\omegaradius, -1.1*\axisradius) -- (\omegaradius, 1.1*\axisradius)
				(\gammaradius, -\bigradius) -- (\gammaradius, \bigradius);
				
				\draw[red, ultra thick, decoration={ markings,
					mark=at position 0.1 with {\arrow{<}}
					,mark=at position 0.3 with {\arrow{<}}
					,mark=at position 0.4 with {\arrow{<}}
					,mark=at position 0.5 with {\arrow{<}}
					,mark=at position 0.6 with {\arrow{<}}
					,mark=at position 0.7 with {\arrow{<}}
					,mark=at position 0.85 with {\arrow{<}},
					,mark=at position 0.92 with {\arrow{<}}
					,mark=at position 0.99 with {\arrow{<}}
				},
				postaction={decorate}]
				let
				\n1 = {asin(\sh/\littleradius)},
				\n2 = {\bigradius*sin(\xii)},
				\n3 = {\littleradius*sin(\ang)},
				\n4 = {\bigradius*cos(\xii)},
				\n5 = {\bigradius*sin(\xii)},
				\n6 = {\bigradius*sin(2)},
				\n7 = {\bigradius*cos(2)}
				in
				(-\n4,-\n5) arc(-180+\xii:-2:\bigradius) -- (\littleradius, -\n6) -- (\littleradius, -\n6) arc(0:-180:\littleradius) -- (-\littleradius, -\n2)
				--  (-\pole+\sh, -\n5)  -- (-\pole+\sh, -\n3-0.35) -- (-\pole+\sh, -\n3-0.35) arc(-90:90:\littleradius)
				-- (-\pole+\sh, \n3+0.3) -- (-\pole+\sh, \n5) -- (-\littleradius, \n5) -- (-\littleradius, \n3)
				-- (-\littleradius, \n3) arc(180:7:\littleradius) --  (\bigradius, \n6-0.03) -- (\bigradius, \n6)  arc(0:180-\ang-2.8:\bigradius)
				--  (-\pole-\sh, \n5) --  (-\pole-\sh, \n3+0.3)
				--  (-\pole-\sh, \n3+0.3) arc(90:270:\littleradius) -- (-\pole-\sh, -\n5)  -- (-\n4,-\n5);

				\node at (1.2*\axisradius,-0.3){$\operatorname{Re} p$};
				\node at (0,1.2*\axisradius) {$\operatorname{Im} p$};
				\node at (0.15,-0.25){$0$};
				\node at (0.4,0.6) {$\gamma_{\varepsilon}$};
				\node at (-\pole-0.5,-0.6) {$\gamma_{r}$};
				\node at (0.,0) {$\bullet$};
				\node at (-\pole,0) {$\bullet$};
				\node at (-\pole,-0.2) {$-\xi^2$};
				\node at (-2.,3.) {$\Gamma$};
				\node at (-1.3,1.4) {$l_1 : \sqrt{p}= \iu \xi$};
				\node at (-1.55,-1.4) {$l_2 :\sqrt{p} = - \iu \xi$};
				\node at (\bigradius/2, 0.4) {$l_3$};
				\node at (\bigradius/2, -0.4) {$l_4$};
			\end{tikzpicture}
}
		}
	\end{center}
	\caption{Contour of integration of \eqref{InvEq} in a complex plane of $p$.}
	\label{contour}
\end{figure}
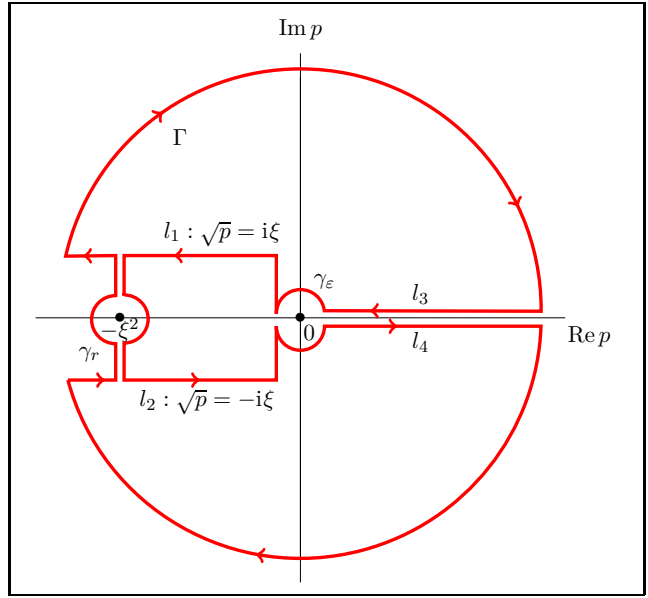
In more detail, this contour can be described as follows. It starts with a big symmetric arc $\Gamma$ around the origin with the radius $R$; extending to two horizontal line segments $l_3, l_4$ (a cut around the line $\operatorname{Im} p = 0, \operatorname{Re} p > 0$); connecting to two small semi-circles $\gamma_\varepsilon$ around the origin with the radius $\varepsilon \ll 1$; then extending to two vertical line segments up to points  $\operatorname{Im}(\sqrt{p}) = \pm \xi$; then again two horizontal parallel line segments  $l_1, l_2$ at $\operatorname{Im} \sqrt{p} = \pm \xi$, which end points are connected to the arc $\Gamma$ with a cut at $\operatorname{Im} p = -\xi^2$ (it consists of two vertical line segments and two semi-circles $\gamma_r$ with the radius $\varepsilon$), such that the whole contour is continuous.

Using a standard technique, we take a limit $\varepsilon \to 0, R \to \infty$, so in this limit the contour takes the form as
depicted in Fig,~\ref{Fig2}. It has a horizontal cut along the positive real line with point $p=0$ excluded from the area inside the contour; another vertical cut at $\Ree(p) = -\xi^2$ with the point $p = -\xi^2$ lying inside the contour; and a branch cut $l_1, l_2$ of the multivalued function $\sqrt{p}$ at $p = -\xi^2$. Also, in this limit $l_7 \to 0, l_8 \to 0$, but in Fig.~\ref{Fig2} we left them as it is for a better readability.

\begin{figure}[!htb]
	\hspace{-0.5in}
	\captionsetup{width=0.8\linewidth}	
	\begin{center}
		\fbox{
\scalebox{0.7}{
			\begin{tikzpicture}
				\def\bigradius{4}
				\def\axisradius{4}
				\def\gammaradius{0}
				\def\omegaradius{0}
				\def\littleradius{0.4}
				\def\xii{15}
				\def\ang{10}
				\def\sh{0.07}
				\def\pole{3}
				\def\vert{1.1}
				
				\draw (-1.1*\axisradius, 0) -- (1.1*\axisradius,0)
				(\omegaradius, -1.1*\axisradius) -- (\omegaradius, 1.1*\axisradius)
				(\gammaradius, -\bigradius) -- (\gammaradius, \bigradius);
				
				\draw[red, ultra thick, decoration={ markings,
					mark=at position 0.1 with {\arrow{<}}
					,mark=at position 0.2 with {\arrow{>}}
					,mark=at position 0.52 with {\arrow{>}}
					,mark=at position 0.9 with {\arrow{>}}
				},
				postaction={decorate}]
				(1.1*\bigradius, 0) -- (0,0) -- (0,\vert) -- (-1.1*\bigradius,\vert);
				
				\draw[red, ultra thick, decoration={ markings,
					mark=at position 0.1 with {\arrow{<}}
					,mark=at position 0.9 with {\arrow{<}}
				},
				postaction={decorate}]
				(0,0) -- (0,-\vert) -- (-1.1*\bigradius,-\vert);
				
				\draw[red, ultra thick, decoration={ markings,
					mark=at position 0.2 with {\arrow{>}}
					,mark=at position 0.8 with {\arrow{<}}
				},
				postaction={decorate}]
				(-\pole,\vert) -- (-\pole,-\vert);

				\node at (1.2*\axisradius,-0.3){$\operatorname{Re} p$};
				\node at (0,1.2*\axisradius) {$\operatorname{Im} p$};
				\node at (-0.2,-0.3){$0$};
				\node at (0.,0) {$\bullet$};
				\node at (-\pole-0.4,-0.3) {$-\xi^2$};
				\node at (-1.1,1.4) {$l_1$};
				\node at (-1.555,-1.4) {$l_2$};
				\node at (\bigradius/2, 0.4) {$l_3$};
				\node at (\bigradius/2, -0.4) {$l_4$};
				\node at (-\pole-0.4,0.5*\vert) {$l_5$};
				\node at (-\pole+0.4,-0.5*\vert) {$l_6$};
				\node at (0.4,-0.5*\vert) {$l_7$};
				\node at (0.4,0.5*\vert) {$l_8$};
				\node at (-\pole, 2*\vert) {$\gamma$};
			\end{tikzpicture}
}
		}
	\end{center}
	\caption{Contour of integration $\gamma$ of \eqref{InvEq} in a complex plane of $p$ at $\varepsilon \to 0, R \to \infty$.}
	\label{Fig2}
\end{figure}
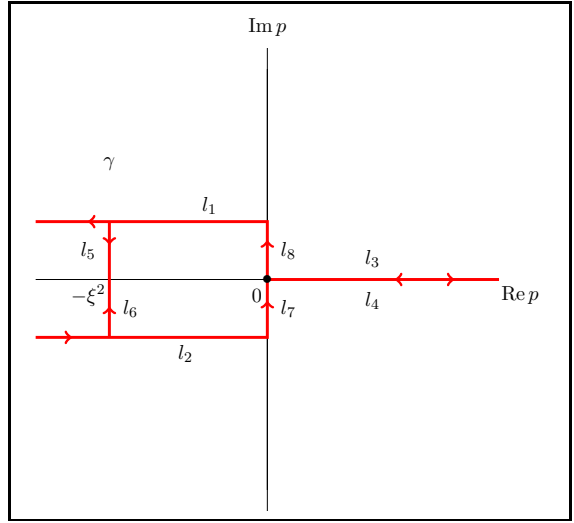

Further, let us emphasize that: i) function  $\bar{u}(t,v,p)$ in \eqref{sol1} is a function of $\alpha$ (and via this dependence function $\beta$ is also a function of $\alpha$); ii)  as we mentioned at the end of Section~\ref{secLMVF}, $\alpha$ is a multivalued function (since $C(t,p)$ is a multivalued function). Due to this, the contour in Fig.~\ref{contour} has to be updated with the corresponding branches for $C(t,p)$. However, since both critical points  $p_\pm$ in \eqref{funcC} are real, this will add two branch cuts to the contour along the real line. It can be checked that these cuts don't contribute to the contour integral under consideration, hence, we don't show them in Fig.~\ref{contour} to make the picture better readable.

Now we are ready to compute the integrals in \eqref{InvEq}. That one in the LHS is regular everywhere inside this contour except the single pole $p = -\xi^2$. By the residue theorem, we obtain
\begin{equation} \label{Int1}
	\oint_{\gamma}  \left( \int_0^\infty  \chi(\xi, t, v) \frac{\xi d\xi}{\xi^2 + p} \right)dp = -2\pi \iu \int_0^\infty \xi \chi(\xi, t, v) d\xi.
\end{equation}

The integral in the RHS of \eqref{InvEq} doesn't have any singularity inside the contour $\gamma$, however, it has several cuts. As can be easily checked, the integrals along the segments $l_3$ and $l_4$ cancel out, as well as those along $l_7$ and $l_8$, and those along $l_5$ and $l_6$.  The integral along the contour $\Gamma$ tends to zero if $R \to \infty $ due to Jordan's lemma. Hence, the only remaining integrals are those along the horizontal semi-infinite lines $l_1$ and $l_2$. They read
\begin{align} \label{Int2}
\int_{l_1} & \bar{u}(t, v, p) e^{\sqrt{p} y(t)}dp = 2 \int_0^\infty \xi \bar{u}(t, v, e^{\iu \pi} \xi^2) e^{\iu \xi y(t)}d\xi,  \\
\int_{l_2}& \bar{u}(t, v, p) e^{\sqrt{p} y(t)}dp = -2 \int_0^\infty \xi \bar{u}(t, v, e^{-\iu \pi} \xi^2) e^{-\iu \xi y(t)}d\xi. \nonumber
\end{align}
Combining \eqref{Int1} with \eqref{Int2} yields
\begin{equation} \label{chi1}
\chi(\xi, t, v) = \frac{1}{\pi \iu} \left[\bar{u}(t, v, e^{-\iu \pi} \xi^2) e^{-\iu \xi y(t)} - \bar{u}(t, v, e^{\iu \pi} \xi^2) e^{\iu \xi y(t)} \right].
\end{equation}
%
Therefore,
\begin{equation} \label{solu}
P(t, x, v) = \frac{1}{\pi \iu} \int_0^\infty \sin\left[\xi (x - y(t))\right] \left[\bar{u}(t, v, e^{-\iu \pi} \xi^2) e^{-\iu \xi y(t)} - \bar{u}(t, v, e^{\iu \pi} \xi^2) e^{\iu \xi y(t)} \right] d\xi.
\end{equation}
Here the argument $p = e^{\pm \iu \pi} \xi^2$ implies $\sqrt{p} = \pm \iu \xi$.  It can be checked (at least, numerically) that
\begin{equation*}
\Ree\left[\bar{u}(t, v, e^{-\iu \pi} \xi^2)\right] = \Ree\left[\bar{u}(t, v, e^{\iu \pi} \xi^2)\right],
\end{equation*}
\noindent and so
\begin{equation*}
\Ree\left[\bar{u}(t, v, e^{-\iu \pi} \xi^2) e^{-\iu \xi y(t)}\right] = \Ree\left[\bar{u}(t, v, e^{\iu \pi} \xi^2) e^{\iu \xi y(t)} \right].
\end{equation*}
Thus, $\chi(\xi, t, v)$ is real. Accordingly, by simple algebra $\bar{u}(t, v, e^{\iu \pi} \xi^2) = (\bar{u}(t, v, e^{-\iu \pi} \xi^2))^*$. Therefore,
\begin{equation} \label{chi}
\chi(\xi, t, v) = -\frac{2}{\pi} \left[\bar{u}_I(t, v, e^{-\iu \pi} \xi^2) \cos(\xi y(t)) + \bar{u}_R(t, v, e^{-\iu \pi} \xi^2) \sin(\xi y(t)) \right],
\end{equation}
\noindent where sub-indices $R, I$ denote the real and imaginary parts.

Substitution of the explicit representation of $\baru(t,v,p)$ in \eqref{sol1} into \eqref{solu} yields the final representation of the solution. This result can be summarized as the following Proposition\footnote{Below in several places we use a controversial notation where a function argument $p$ is replaced with $\sqrt{p}$. This, however,  allows writing many formulae in a general way, further having in mind that they should be used with $\sqrt{p} = -\iu \xi$ and $\sqrt{p} = \iu \xi$. In both cases this implies that $p = - \xi^2$.}

\begin{proposition} \label{prop1}
Let us consider a time-dependent Heston stochastic volatility model defined in \eqref{heston} with the additional condition in \eqref{cond1} that
\begin{equation}
\frac{\kappa(t) \theta(t)}{\sigma^2(t)} = \frac{m}{2},
\end{equation}
\noindent where $m \in [0,\infty)$ is some constant.  Also, let us consider a Down-and-Out barrier Put option written of the underlying which follows  \eqref{heston}. Let the lower barrier $L(t)$ be time-dependent as defined in \eqref{bcH}, and let $y(t) = \log(L(t)/K)$. Given the values of the log-spot $x = \log(S/K)$ and the instantaneous variance $v$ at the initial moment of time $t$, the price of this option $P(t,x,v)$ is given by
\begin{align} \label{finSol}
P(t, x, v) &= -\frac{2}{\pi}\int_0^\infty \sin\left[\xi (x - y(t))\right] \Big\{ \cos(\xi  y(t)) \Imm[P_s(t,v,-\iu \xi)]
+ \sin(\xi  y(t)) \Ree[P_s(t,v,-\iu \xi)] \Big\} d\xi, \nonumber \\
&= -\frac{1}{\pi}\int_0^\infty |P_s(t,v,-\iu \xi)| \left\{\cos[\phi - \xi(x-2 y(t))] - \cos[\phi + x \xi] \right\} d\xi, \\
P_s(t,v,-\iu \xi) &= P_{1}(t,v,-\iu \xi)  +  P_2(t,v,-\iu \xi), \quad |P_s(t,v,-\iu \xi)|^2 = [\Ree P_s(t,v,-\iu \xi)]^2 + [\Imm P_s(t,v,-\iu \xi)]^2,  \nonumber \\
P_1(t,v,\sqrt{p}) &= K \left[ \frac{ e^{-y(T) \sqrt{p}} - 1}{\sqrt{p}} - \frac{e^{-y(T) \left(\sqrt{p}-1\right)} - 1}{\sqrt{p}-1} \right] e^{\beta(t,p) + \gamma(t,p) v}, \quad \phi = \arg(P_s(t,v,-\iu \xi)), \nonumber \\
P_2(t,v,\sqrt{p}) &= \frac{1}{2}  \int_t^T d s \int_0^\infty d v' \Bigg[  \Phi\left(s, v' \right) \sqrt{v'} g(s,p) \nonumber \\
&\times G\left(\frac{1}{4}\int_t^s g^2(\zeta, p) \sigma^2(\zeta) d \zeta, \, g(t,p) \sqrt{v},\, g(s, p) \sqrt{v'} \right) e^{ -y(s) \sqrt{p}  +\alpha(t,p) v + \beta(t,p) - \left[\beta(s, p) + v'\alpha(s,p)\right]} \Bigg], \nonumber \\
B_1 &= \frac{\alpha(T,p)}{g^2(T,p)}, \quad \gamma(t,p) = \alpha(t,p) - \frac{\alpha(T,p)}{1 + 2 B_1 \tau} \frac{g^2(t,p)}{g^2(T,p)}. \nonumber
\end{align}
Once the function $\Phi\left(t, v \right)$ is known (which is a half of  the gradient (in $x$) of the solution $P(t,v,x)$ at the boundary $x=y(t)$), the solution of this pricing problem is obtained via \eqref{finSol} by computing the integrals in the RHS.
\hfill $\square$
\end{proposition}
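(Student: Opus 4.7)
The proof is constructive: it assembles the derivations already carried out in Section~\ref{pricePDE} into the compact formula \eqref{finSol}. The plan is to (i) apply the GIT \eqref{GITdef} to the pricing PDE \eqref{PDE1}, (ii) invoke Proposition~\ref{prop2} to reduce the transformed PDE to the Bessel-type form \eqref{Bess1}, (iii) solve the Bessel problem via the known Green's function \eqref{Green}, and (iv) invert the transform via a keyhole-contour argument to recover $P(t,x,v)$.

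First, applying $\baru(t,v,p)=\int_{y(t)}^{\infty} e^{-\sqrt{p}x}P(t,x,v)\,dx$ to both sides of \eqref{PDE1}, integrating by parts in $x$, and using the vanishing boundary conditions \eqref{bcH1} together with the smoothness remark that $P_v(t,y(t),v)=0$, one obtains the inhomogeneous, affine-in-$v$ PDE \eqref{PDEv}, whose source term $\Phi(t,v)$ encodes the unknown gradient of $P$ at the moving boundary. Under condition \eqref{cond1}, Proposition~\ref{prop2} applies: the substitution $\baru=\barU e^{\alpha v+\beta}$, $z=g(t,p)\sqrt{v}$, and the time change \eqref{tauTr} --- with $\alpha$ solving the Riccati equation \eqref{ric2} --- reduce the problem to \eqref{Bess1} on $[0,\infty)$ with the reflecting treatment of the regular boundary at $z=0$ discussed in the text. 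Duhamel's principle against the Green's function \eqref{Green} then yields \eqref{sol}; returning to $(t,v)$ produces \eqref{sol1}. The first summand $I_1$ is a Gaussian-in-$\zeta$ integral, evaluates in closed form, and becomes $P_1$ in \eqref{finSol}; the second summand becomes $P_2$.

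Second, invert the transform. Expanding $P(t,x,v)=\int_0^\infty \chi(\xi,t,v)\sin[\xi(x-y(t))]\,d\xi$, which automatically enforces the Dirichlet condition at $x=y(t)$, and applying the forward GIT yields \eqref{InvEq}. Integrate both sides along the keyhole contour of Fig.~\ref{contour} (equivalently the limiting contour $\gamma$ of Fig.~\ref{Fig2}). The LHS integrand has a simple pole at $p=-\xi^2$, so the residue theorem gives \eqref{Int1}. For the RHS, $\baru$ is analytic inside $\gamma$; the pairs $l_3/l_4$, $l_5/l_6$, $l_7/l_8$ cancel by construction; Jordan's lemma annihilates the large arc $\Gamma$; only the parallel segments $l_1,l_2$ at $\Imm\sqrt{p}=\pm\xi$ survive, producing \eqref{Int2}. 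Equating the two sides, using $\sqrt{p}=\pm\iu\xi$ on $l_{1,2}$, gives \eqref{chi1}; the conjugation symmetry $\baru(t,v,e^{\iu\pi}\xi^2)=\overline{\baru(t,v,e^{-\iu\pi}\xi^2)}$ collapses $\chi$ to the real form \eqref{chi}; and substituting into \eqref{invTr} and rewriting the resulting product of trigonometric factors in amplitude-phase form delivers \eqref{finSol}.

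The main obstacle is the contour argument in step~(iv). Specifically: (a) one must check that the branch points $p_\pm$ of $C(t,p)$ in \eqref{funcC}, being real, can be served by cuts that lie outside the keyhole region or produce cancelling contributions on it --- this is asserted in the text but is the delicate analytic input; (b) Jordan's lemma on $\Gamma$ and the pairwise cancellation on $l_5,l_6$ rely on quantitative decay of $\baru(t,v,p)$ as $|p|\to\infty$, which must be traced through the explicit $p$-dependence of $\alpha,\beta,g$ and of the Bessel Green's function \eqref{Green}; and (c) the Fubini exchanges used both to derive \eqref{InvEq} and to write $P_2$ as an iterated integral over $(s,v')$ and $\xi$ need justification from the assumed absolute, uniform convergence stated after \eqref{invTr}. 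The conjugation property itself is routine: every building block ($c(p)$, $\bar{\kappa}(t,p)$, $\alpha$, $\beta$, $g$, the Green's function, and the real-valued $\Phi$) commutes with complex conjugation once $\sqrt{p}\mapsto \overline{\sqrt{p}}$ is identified with the reflection across $\Ree p$.
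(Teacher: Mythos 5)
Your proposal follows the paper's own derivation step for step: apply the GIT \eqref{GITdef} to reduce \eqref{PDE1} to the inhomogeneous affine PDE \eqref{PDEv}, invoke Proposition~\ref{prop2} and the Bessel Green's function \eqref{Green} to obtain \eqref{sol1}, and then invert via the keyhole-contour / Fourier-sine argument to arrive at \eqref{chi1}--\eqref{finSol}. The analytic caveats you flag (branch-cut contributions of $p_\pm$, decay needed for Jordan's lemma, Fubini exchanges, and the conjugation symmetry that the paper itself only verifies numerically for $P_2$) are exactly the points the paper treats informally, so your reconstruction is faithful both in substance and in what it leaves unproved.
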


Note, that alternatively the inversion formula \eqref{solu} (and, accordingly, the option Put price in \eqref{finSol}) could be  derived via Fourier-sine transform, see Appendix ~\ref{appSinFT}.

Both integrals in the RHS of the definition of $P_1, P_2$ in \eqref{finSol} should be well-behaved at $v \to \infty$. This can be achieved by choosing an appropriate terminal condition for $\alpha(t,p)$ in \eqref{ric2} and is discussed in Appendix~\ref{app1}. It is shown there that a good terminal condition could be $\alpha(T,p) = 0$, so $B_1 = 0$.

Also, it can be directly checked that $\Ree(P_1(t,v,\iu \xi)) = \Ree(P_1(t,v,-\iu \xi))$ and, hence, the difference of $P_1$ is pure imaginary. Therefore, $[P_1(t,v,-\iu \xi)) - P_1(t,v,-\iu \xi)]/\iu$ is real. Same should be true for the difference of $P_2$, however, this can be verified only numerically.

Similar to the one-dimensional case described in detail in \citep{ItkinLiptonMuraveyBook}, the function $\Phi\left(t, v \right)$ solves a linear mixed Volterra-Fredholm (LMVF) integral equation of the second kind. It can be obtained by differentiating both sides of \eqref{finSol} with respect to $x$  and setting $x = y(t)$. Assuming that $\Phi\left(t, v \right) \in \mathbb{R}$, this yields
\begin{align} \label{volt}
f(t,v) &= \Phi(t,v) + \frac{1}{2\pi} \int_t^T d s \int_0^\infty d v'  \Phi\left(s, v' \right) {\cal K}(s, v', t,v), \\
f(t,v) &= -\frac{1}{\pi}\int_0^\infty \xi
\Big\{ \cos(\xi  y(t)) \Imm[P_1(t,v,-\iu \xi)] + \sin(\xi  y(t)) \Ree[P_1(t,v,-\iu \xi)] \Big\} d\xi, \nonumber
\end{align}
\noindent and ${\cal K}(s, v', t,v)$ is the kernel of this LMVF integral equation which reads
\begin{align} \label{kernel}
{\cal K}(s, v', t,v) &= \int_0^\infty \xi \Big\{ \cos(\xi  y(t)) \Imm[{\mathfrak K}(s, v', t,v, -\iu \xi)] + \sin(\xi  y(t)) \Ree[{\mathfrak K}(s, v', t,v, -\iu \xi)] \Big\} d \xi, \\
{\mathfrak K }(s, v', t,v, \sqrt{p}) &= \sqrt{v'} g(s,p)  G\left(\frac{1}{4}\int_t^s g^2(\zeta, p) \sigma^2(\zeta) d \zeta, \, g(t,p) \sqrt{v},\, g(s, p) \sqrt{v'} \right) \nonumber \\
&\times \exp\left[ -y(s)\sqrt{p} +\alpha(t,p) v + \beta(t,p) - \left(\beta(s, p) + v'\alpha(s,p)\right) \right]. \nonumber
\end{align}

Thus, we have managed to generalize the GIT method originally proposed in \citep{CarrItkin2020jd} for solving one-dimensional financial problems with moving barriers and further developed in a series of papers and \citep{ItkinLiptonMuraveyBook} to solving similar problems for the models with stochastic volatility. Note, that various advantages of the GIT method as applied to one-dimensional problems are reported in the above cited papers. However, here, for the two-dimensional (2D) problem  the drawback is that, in contrast to the one-dimensional counterparts, the integral on $\xi$ cannot be taken analytically. Therefore, our 2D LMVF equation instead of a closed form kernel has the one which is expressed via an integral in \eqref{kernel}. At the first glance this should significantly slow down computation of the gradient $\Phi(t,v)$. However, as shown in Section~\ref{solVolt}, the method of radial basis functions (RBF) being used for solving \eqref{volt} allows reduction of the three-dimensional integral to a 2D one in variables $(t,\xi)$. Therefore, our approach seems to be a natural extension of the GIT method to the 2D case while preserving all nice features of the method.

\section{Solution of the LMVF equation} \label{solVolt}

The LMVF equation in \eqref{volt} can be solved by using various numerical methods. Here we utilize the Radial Basis Functions method as this was proposed in \citep{Assari2019,Zhang2014, ItkinMuraveySABR}  (see also references therein).
A short description of the RBF method is given in the next section (a more detailed discussion can be found in \citep{ItkinMuraveySABR}).

\subsection{Basics of the RBF method}

The main idea of the RBF method is as follows. Interpolation of functions by using RBFs is known to be very efficient when solving various problems of intermediate ($10 > d > 3$) dimensionality including those in mathematical finance, see, e.g.,  \cite{YCHon3, Fasshauer2, Pettersson, FornbergFlyer2015} and also references in \citep{Assari2019}. It converges exponentially when increasing the number of nodes and is meshless. The latter allows obtaining a high-resolution scheme using just a few discretization nodes.

To make the further exposition transparent,  let us provide some definitions along the lines of \citep{Assari2019,ItkinMuraveySABR}. A function $\Theta: \mathbb{R}^{d} \rightarrow \mathbb{R}$ is called to be radial if there exists a univariate function $\phi:[0, \infty) \rightarrow \mathbb{R}$ such that
\begin{equation}
\Theta(\mathbf{x})=\phi(r)
\end{equation}
\noindent  where $r=\|\mathbf{x}\|$ and $\|\cdot\|$ is some norm in $\mathbb{R}^{d}$. In this paper we consider just the Euclidean norm. Let $\chi=\left\{\mathbf{x}_{1}, \ldots, \mathbf{x}_{N}\right\}$ be a set of scattered points selected in the domain $\Omega \subset \mathbb{R}^{d}$. A function $u(\mathbf{x})$ at an arbitrary point $\mathbf{x} \in \Omega$ can be approximated by using the global radial function $\phi(\|\mathbf{x}\|)$ via a linear combination
\begin{equation} \label{GlInterp}
u(\mathbf{x}) \approx \mathcal{G}_{N} u(\mathbf{x}) = \sum_{i=1}^{N} c_{i} \phi\left(\left\|\mathbf{x}-\mathbf{x}_{i}\right\|\right), \quad \mathbf{x} \in \Omega,
\end{equation}
\noindent where the coefficients $\left\{c_{1}, \ldots, c_{N}\right\}$ are determined by the interpolation conditions
\begin{equation} \label{VoltRBF}
\mathcal{G}_{N} u\left(\mathbf{x}_{i}\right)=u\left(\mathbf{x}_{i}\right)=u_{i}, \quad i=1, \ldots, N.
\end{equation}

In the literature various choices of the RBFs exist. Among others, let us mention the Gaussian RBF
\begin{equation} \label{gauss}
\Theta(\mathbf{x}) = e^{-\varepsilon \|\mathbf{x}\|^{2}},
\end{equation}
\noindent where $\varepsilon>0$ is the shape parameter. This function is strictly positive-definite in $\mathbb{R}^{d}$ and, therefore, the expansion in \eqref{GlInterp} is non-singular.

As mentioned, the RBF methods belong to the class of meshfree methods. That means that no regular grid in $\tau_j, z_l$ is required to run it (in contrast, e.g. to the FD method).  Therefore, taking a 2D set of collocation nodes $\{\tau_{j(l)}, z_l\}, \ l = 1,...,\bar{l}, \ j(l)=1,...,\bar{j}(l)$ one can substitute them into \eqref{VoltRBF} and get a system of linear equations for the coefficients $c_{j,l}$. For instance, in case of the regular grid with $N_l$ nodes in $z_l$ and $N_j$ nodes in $\tau_j$ we have $N_j N_l$ unknown coefficients which solve a system of $N_j N_l$ linear equations. The matrix of this system is dense and, therefore, complexity of solving this system by using the direct solver is $O(N^3_j N^3_l)$. Obviously, this result is not satisfactory from computational point of view. Iterative solvers can improve this especially when a suitable preconditioner can be constructed.

The global approach can be significantly improved in a several ways. For instance, a local version of the method estimates the solution using only the discrete collocation nodes and locally supported RBFs constructed on a small set of nodes instead of all points over the analyzed domain. This approach enables a significant reduction in the number of non-zero elements that remain in the coefficient matrix, hence, lowering the computational intensity required for solving the system. As shown in \citep{Assari2019} the complexity of such a scheme (for the 2D problem) drops down to $O(N_1 N_2 m_1 m_2)$ where $m_i, \ i=1,2$ is the number of the corresponding integration nodes (those that are used in a quadrature scheme when approximating the integrals in the LMVF equation, e.g., by using the Gauss–Legendre integration rule on the local influence domain). Also,  in comparison with the globally supported RBF for solving integral equations, the method of \citep{Assari2019} is stable and uses much less computer memory.

An alternative way of making the global RBF better is using a "better basis" for RBF interpolation. Indeed, it is well-known, e.g., \citep{McCourt2012}, that the global Gaussian RBF method leads to a notoriously ill-conditioned interpolation matrix whenever $\varepsilon$ is small and the set of the basis functions in \eqref{gauss} becomes numerically linearly dependent on $\mathbb{R}^d$. This leads to severe numerical instabilities and limits the practical use of Gaussians — even though one can approximate a function with the Gaussian kernel with spectral approximation rates. On the other hand, small $\varepsilon$ provide better accuracy, and so have to be considered as an option for an accurate pricing. It is also known that if $\varepsilon$ is kept fixed, convergence stagnates even if $N_j, N_l$ grow, and if $N_j, N_l$ are fixed, the error blows up with the decrease of $\varepsilon$. Therefore, to obtain well-conditioned (and therefore numerically stable) interpolation among others, let us mention the RBF-QR method, \citep{Fornberg2, Larsson2, Larsson7}.

Having this in mind, in this paper we, however, use just a simple version of the global Gaussian RBF method. This is done for two reasons. First, here we want to illustrate that the proposed method combined even with Gaussian RBFs provides reasonable option prices. Second, even with this global method the speed of computations is better than that of the FD method. Third, further improvement of the method is subject of a separate research which will be presented elsewhere.

As follows from \eqref{volt}, the final linear system of algebraic equations for finding coefficients of the RBF interpolation has the form
\begin{equation} \label{linSys}
\|A\| |c| = K |f|,
\end{equation}
\noindent where $\|A\|$ is the matrix of the discretized RHS of \eqref{volt}, $|c|$ is the vector of unknown coefficients and $|f|$ is the vector of the discretized LHS of \eqref{volt} with multiplier $K$ dropped away. As follows from \eqref{chi1} and representation of $\bar{u}(t,v,p)$ in \eqref{sol1}, the matrix $\|A\|$ doesn't depend on $K$, and the vector $|f|$ weakly depends on $K$. Therefore, to find prices  of options with same maturity $T$ and various strikes $K$ this system can be solved just ones since this is a linear system with the same matrix and multiple RHS. This can be efficiently done by using any parallel architecture. Hence, all coefficients $c(K_i), \ i=1,...$ can be computed by solving only a single LMVF equation. Then, the prices for all strikes can be found from \eqref{finSol}, where the dependence on $K$ comes from $x = \log(S/K), P_1$ and $\Phi(s,v')$. Those integrals (for various K) can be computed in one sweep using software which supports vectorization, e.g. MATLAB. Thus,  our method allows almost simultaneous computation of the barrier  option prices for all strikes.

\subsection{Numerical scheme}

To employ the RBF interpolation described in the previous section for solving the LMVF equation we need a set of $N$ collocation points $\{ (t_1, v_1),\ldots,(t_1,v_{N_v}),\ldots,(t_{N_t}, v_1),\ldots,(t_{N_t},x_{N_v})\}$ in the 2D space $\calB = [0,T] \times [0,\infty)$. Then the unknown solution $\Phi(t,v)$ can be approximated using the RBF method as
\begin{equation} \label{approxRBF}
\Phi(t,v) \approx \sum_{i=1}^{N_t N_v} c_{i} \phi\left(\left\|\mathbf{x}-\mathbf{x}_{i}\right\|\right)
= \sum_{k=1}^{N_t} \sum_{l=1}^{N_v} c_{kl} e^{-\varepsilon [(v_l - v)^2 + (t_k - t)^2]}.
\end{equation}

Next this representation should be substituted into \eqref{volt} to obtain a system of equations for the coefficients
$c_{jl}$. This, however, requires  discretization of  the LMVF integrals by using some quadrature rules.
Note, that we have three integrals in the RHS of \eqref{volt}. Therefore, after substituting \eqref{approxRBF} into \eqref{volt} all triple integrals in the RHS of \eqref{volt} acquire the form
\begin{align} \label{Iki}
I_{kl}(t,v) &= \int_t^T d s \int_0^\infty d v'  \Phi\left(s, v' \right) e^{-\varepsilon [(v' - v_l)^2 + (s - t_k)^2]}
 \int_0^\infty  d \xi \, \xi \sqrt{v'}  e^{-y(s)\sqrt{p} +\alpha(t,p) v + \beta(t,p)- \beta(s, p) - \alpha(s,p) v'} \nonumber \\
&\times g(s,p) \frac{\sqrt{g(t,p) g(s, p) \sqrt{v v'}}}{\tau(t,p) - \tau(s,p)} \left(\frac{g(s, p) \sqrt{v'}}{g(t,p) \sqrt{v}}\right)^b e^{- \frac{g^2(t,p) v + g^2(s, p) v'}{2 (\tau(t,p) - \tau(s,p))}}
I_{b-\frac{1}{2}} \left(\frac{g(t,p) g(s,p)\sqrt{v v'}}{(\tau(t,p) - \tau(s,p))} \right) \\
&=   v^{1/4 - b/2} \int_t^T d s \,  e^{-\varepsilon (s - t_k)^2} \int_0^\infty d\xi \, \xi e^{- y(s)\sqrt{p} + \beta(t,p) - \beta(s, p) + \alpha(t,p) v}  \left(\frac{g(s, p)}{g(t,p)}\right)^{b-1/2} \frac{e^{- \frac{g^2(t,p) v}{2 (\tau(t,p) - \tau(s,p))}}}{\tau(t,p) - \tau(s,p)}  \nonumber \\
&\times g^2(s, p) \int_0^\infty dv' (v')^{3/4 + b/2}  e^{-\varepsilon (v' - v_l)^2 - \frac{g^2(s,p) v'}{2 (\tau(t,p) - \tau(s,p))} - \alpha(s,p) v'} I_{b-\frac{1}{2}} \left(\frac{g(t,p) g(s,p)\sqrt{v v'}}{\tau(t,p) - \tau(s,p)} \right). \nonumber
\end{align}
Unfortunately, to the best of our knowledge the last integral in $v'$ (let us denote is as $\calJ$) cannot be taken analytically.

It turns out, however, that by using a special trick we are able to reduce this method to computation of just two integrals. For doing so, let us introduce a new function $\phi(s,v')$ which, strictly speaking, is not an RBF but behaves like a Gaussian RBF.  Hence, instead of the Gaussian RBF
\begin{equation} \label{gauss1}
\Theta_{kl}(s,v') = e^{-\varepsilon [(v'-v_l)^2 + (s-t_k)^2]},
\end{equation}
\noindent we propose another function
\begin{equation} \label{newGauss}
\bar{\Theta}_{kl}(s,\nu') = \left(\frac{\nu'}{\nu_l}\right)^{2 \varepsilon \nu_l^{2}}  e^{-\varepsilon \left[\nu'^2 - \nu_l^2 + (s - t_k)^2 \right]} +8 \delta(\nu'),
\end{equation}
\noindent where a new variable $\nu = \sqrt{v}$ has been introduced, and $\delta(\nu)$ is the Dirac delta function, \citep{as64}, which is necessary in this definition to have $\bar{\Theta}_{kl}(s,\nu')$ to be positive-definite. The latter condition is required because if the basis function is strictly positive definite then the associated interpolation matrix $\|\Theta\|$ is positive definite and, hence,  non-singular. Therefore, the interpolation problem will be well-posed and has a unique solution, \citep{Fasshauer}. Proof of the positive-definiteness of $\bar{\Theta}(s,\nu')$ is given in Appendix~\ref{appPD}.

The function $\bar{\Theta}_{kl}(s,\nu')$ behaves similar to another Gaussian RBF
\begin{equation} \label{gaussSqrt}
G_{kl}(s,\sqrt{v'}) = e^{-\varepsilon [(\sqrt{v'}-\sqrt{v_l})^2 + (s-t_k)^2]}.
\end{equation}
Since $v' \in [0,\infty]$ the function $G_{kl}(s,\sqrt{v'})$ is a good Gaussian RBF.

It can be checked that $\bar{\Theta}_{kl}(s,\nu')$ has its maximum in $\nu'$ at $\nu' = \nu_l$ and $\bar{\Theta}_{kl} (t_k,\nu_l) = 1$ while aside of $(t_k,\nu_l)$ it rapidly vanishes. Also, when $\nu_l \to 0$ we have the correct limit
$\bar{\Theta}_{kl}(s,\nu') \to e^{-\epsilon  \left[\nu'^2 + (s-t_k)^2 \right]}$. Comparative behavior of $\bar{\Theta}_{kl}(s,\nu')$ and $G_{kl}(s,\nu')$ is shown in Fig.~\ref{PhiComp}.
\begin{figure}[!htb]
\begin{center}
\hspace*{-0.3in}
\subfloat[]{\includegraphics[width=0.45\textwidth]{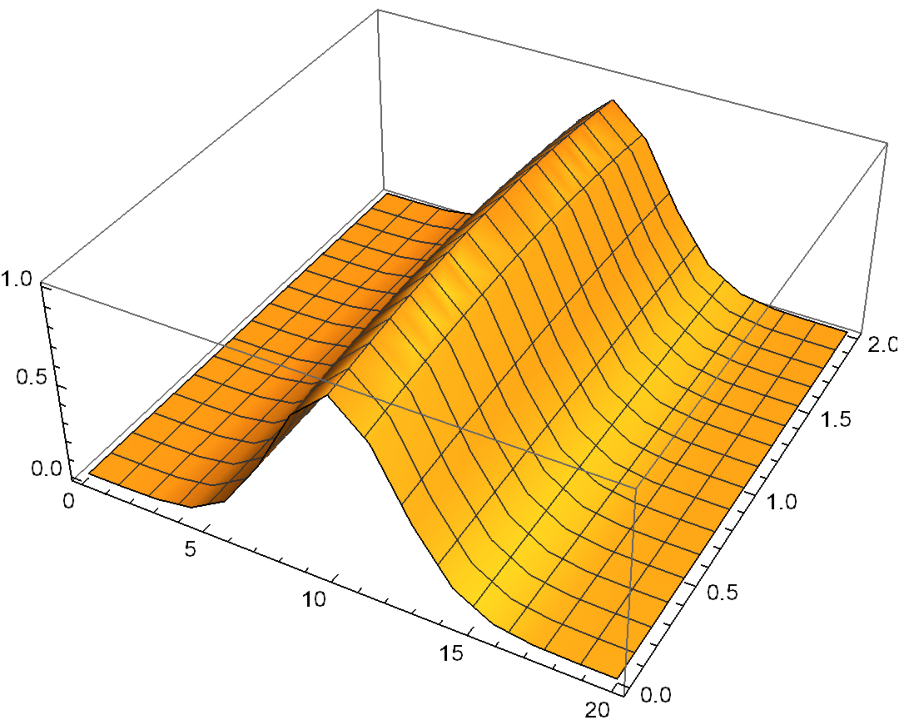}}
\subfloat[]{\includegraphics[width=0.45\textwidth]{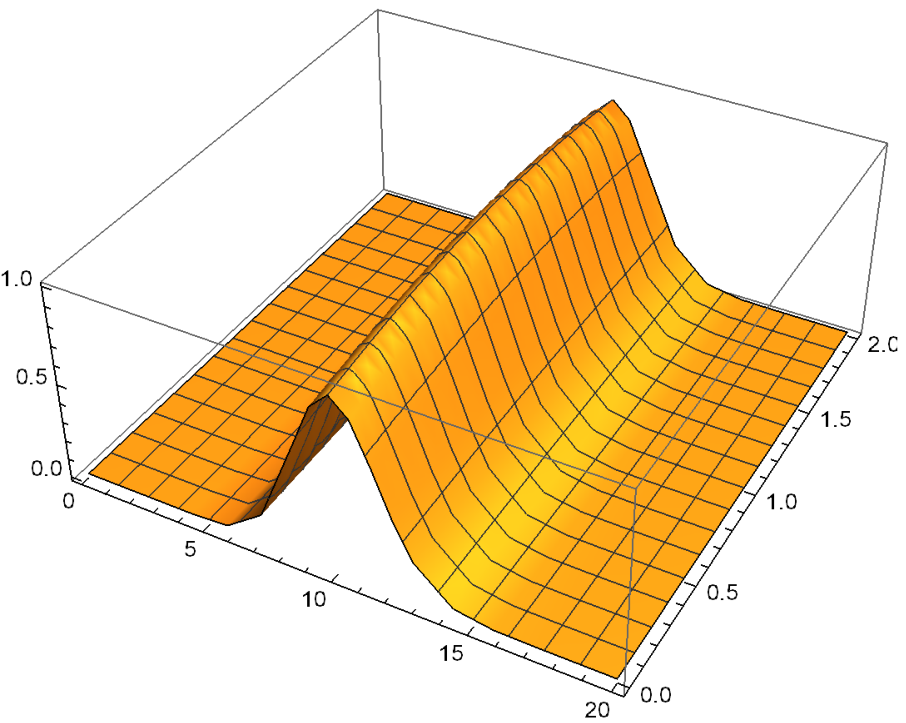}}
\end{center}
\caption{The behavior of functions a) $G(t,\nu)$ and b) $\bar{\Theta}(t,\nu)$ -  at $\varepsilon =  0.1, \ \nu_l = 10, \ t_k = 1$.}
\label{PhiComp}
\end{figure}

Using $\bar{\Theta}_{kl}(s,\nu')$ in the RBF method in the same way as we previously did it for $\Theta_{kl}(s,v')$ gives rise to the following  transformed expression for $I_{kl}(t,v)$
\begin{align} \label{IkiNew}
I_{kl}(t,v) &=   v^{1/4 - b/2} \int_t^T d s \,  e^{-\varepsilon (s - t_k)^2} \int_0^\infty d\xi \, \xi e^{- y(s) \sqrt{p} + \beta(t,p) - \beta(s, p) + \alpha(t,p) v}  \left(\frac{g(s, p)}{g(t,p)}\right)^{b-1/2} \\
&\times \frac{e^{- \frac{g^2(t,p) v}{2 (\tau(t,p) - \tau(s,p))}}}{\tau(t,p) - \tau(s,p)} g^2(s, p)
\int_0^\infty dv' (v')^{3/4 + b/2}  \left(\frac{\nu'}{\nu_l}\right)^{2 \varepsilon \nu_l^{2}}  e^{-\varepsilon (\nu'^2 - \nu_l^2)} e^{- \frac{g^2(s,p) v'}{2 (\tau(t,p) - \tau(s,p))} - \alpha(s,p) v'} \nonumber \\
&\times I_{b-\frac{1}{2}} \left(\frac{g(t,p) g(s,p)\sqrt{v v'}}{\tau(t,p) - \tau(s,p)} \right). \nonumber
\end{align}
Now, having in mind that $v' = \nu^{'2}$, the last integral (let us denote it as $\cal J$) can be computed in closed form since
\begin{align} \label{Jint}
\calJ &= a_0 \int_0^\infty \nu^{a_1}  e^{-a_2 \nu^2} I_{b-\frac{1}{2}}(a_3 \nu)d\nu  = a_0 2^{-b-\frac{1}{2}} a_3^{b-\frac{1}{2}} a_2^{-a_4} \frac{\Gamma \left(a_4\right)}{\Gamma\left(b+\frac{1}{2}\right)} M \left(a_4; b+\frac{1}{2}; \frac{a_3^2}{4 a_2}\right),
\end{align}
\noindent and in our case
\begin{alignat}{2} \label{aCoeff}
a_0 &= 2 (e/v_l)^{\varepsilon v_l}, &\qquad a_2 &= \varepsilon + \frac{g^2(s,p)}{2 (\tau(t,p) - \tau(s,p))} + \alpha(s,p), \\
a_1 &= \frac{5}{2} + b + 2 \varepsilon v_l, &\qquad a_3 &= \frac{g(t,p) g(s,p)\sqrt{v}}{\tau(t,p) - \tau(s,p)},
\qquad a_4 = \frac{1}{4} (2 a_1 + 2 b + 1) = b + \frac{3}{2} + \varepsilon v_l. \nonumber
\end{alignat}
Here $\Gamma(x)$ is the gamma function, and $M(a,b,z)$ is the Kummer confluent hypergeometric function, \citep{as64}.

Note that the Delta function in the definition of $\bar{\Theta}_{kl}(t,\nu)$ doesn't bring any problem with solving the LMVF equation. That is because first, it doesn't contribute to the $I_{kl}(t,v)$ since the corresponding integrand is proportional to $\nu = \sqrt{v'}$, hence the integral vanishes at $\nu' = 0$. Second, a set of collocation points can be chosen to exclude the point $v_l = 0$ and instead replace it with the point $v = \epsilon \ll 1$ which doesn't influence the quality of approximation. Then the term with the Delta function in the representation of $\Theta_{kl}(t,\nu)$ vanishes as well.

Thus, we have managed to reduce a triple integral to a double one using the new basis function $\bar{\Theta}(t,\nu)$. Let us underline that the RHS of $I_{kl}(t,v)$ depends on $v$ explicitly, and we don't need any numerical approximation of the integral on $\nu'$. Instead, we need to use some quadrature rule to compute the integral on $\xi$ given the values of $s$ and $v'$. Overall, the solution of the LMVF equation needs computation of a 2D integral, i.e. the complexity of getting this solution fits the scope of the standard RBF method for 2D integral equations, like in \citep{Assari2019}.

It turns out, that the RHS of \eqref{Jint} can be further simplified. Indeed, practical values of $v_l$ lie in the region, e.g., $0 < v_l < 2$ while a typical value of $\varepsilon$ is $\varepsilon \approx O(0.1)$. Also, from Proposition~\ref{prop2} it follows that $b > 0$ if the model parameters satisfy the Feller condition, and $b > -1/2$ otherwise. Therefore, in the zero-order  approximation in $\varepsilon$ we can set $a_4 \approx b + 3/2$. Then, the following identity holds, \citep{NIST:DLMF}
\begin{equation} \label{KumApp}
M \left(b+\frac{3}{2}; b+\frac{1}{2}; x\right) = \left(1 + \frac{2x}{1+2b}\right)e^x.
\end{equation}
Our numerical experiments show that this  approximation works well and produces a small error in the final option price. Accordingly, we get
\begin{align} \label{Jint1}
\calJ &= 2^{1/2-b} C_1 \left(\frac{e}{v_l}\right)^{\varepsilon v_l}  \dfrac{a_3^{b-1/2}}{a_2^{b+3/2}}
\left(b + \frac{1}{2} + a_5\right)e^{a_5}, \qquad a_5 = \frac{a_3^2}{4 a_2}, \quad C_1 = e^{-\varepsilon v} v^{\varepsilon v_l}.
\end{align}
The factor $C_1$ in the zero-order approximation in $\varepsilon \ll 1$ should be equal to one. However, we include it here to have the correct limiting value of $\calJ$ when $ s-t \ll 1$ (see below).

When numerically computing the integral in $s$ in \eqref{IkiNew}, one has to take into account that at $s=t$ the Green function in \eqref{kernel} becomes the Dirac delta function $\delta(v' - v)$. Therefore, in this case
\begin{align} \label{kernelDelta}
\int_t^T d s \int_0^\infty & d v'  \Phi\left(s, v' \right) \xi \left[{\mathfrak K}(s, v', t,v, -\iu \xi)
- {\mathfrak K}(s, v', t,v, \iu \xi) \right] \\
&= \left(\frac{v}{v_l}\right)^{\varepsilon v_l}  e^{-\varepsilon \left[v - v_l + (t - t_k)^2 \right]} \int_0^\infty  \xi \left[e^{0} - e^{-0}\right] d\xi = 0. \nonumber
\end{align}
Also, when $s - t \ll 1$, from \eqref{IkiNew} we have
\begin{align}
a_2 &\approx \frac{g^2(s,p)}{2 (\tau(t,p) - \tau(s,p))}, \qquad a_5 \approx \frac{g^2(t,p) v}{2 (\tau(t,p) - \tau(s,p))},
\end{align}
\noindent and the integrand in \eqref{IkiNew} becomes proportional to
\begin{align*}
 \left(\frac{v}{v_l}\right)^{\varepsilon v_l} e^{-\varepsilon (s - t_k)^2 - \varepsilon (v-v_l)} \int_0^\infty d\xi \, \xi  e^{[y(y) - y(s)]\sqrt{p} + \beta'(t,p)(t-s) + \alpha(t) q(t) v (t-s)} \left(\frac{g(t,p)}{g(s,p)}\right)^2 (\tau(t,p) - \tau(s,p)).
\end{align*}
Here $q(t) = 2\log'(g(t)) - \log'(\alpha(t)) > 0$, and
\begin{equation}
\alpha(t) q(t) = - \frac{1}{2} \alpha^2(t) \sigma^2(t) + c = - \frac{1}{2} \alpha^2(t) \sigma^2(t)  + \frac{1}{2}(-\xi^2 \pm \iu \xi),
\end{equation}
\noindent so $\Ree(\alpha(t) q(t)) < 0$. Thus, the integral $I_{kl}(t,v)$ is well-behaved in this limit.

Also $\Ree(a_2) < 0$. Indeed,  by definition in \eqref{aCoeff}
\begin{equation}
a_2 = \varepsilon + \frac{g^2(s,p)}{2 (\tau(t,p) - \tau(s,p))} + \alpha(s,p),
\end{equation}
\noindent and as shown in Appendix~\ref{app1}, $\Ree(\alpha) \le 0$, while the definition of $\tau(t,p)$ in \eqref{tauTr} implies $\tau(t,p) <  \tau(s,p)$ for $t < s$.

\subsubsection{Computation of oscillating integrals}

Combining all the results obtained in the previous section, the integrals in \eqref{IkiNew} can be finally represented in the form
\begin{align} \label{IkiNew2}
I_{kl}(t,v) &=  A \int_t^T d s \,  e^{-\varepsilon (s - t_k)^2} \int_0^\infty d \xi\, \frac{\xi}{\zeta(t,s,p)}
\left(\frac{g(s,p)}{g(t,p)} \right)^{b-1/2} \left(\frac{\varpi_3}{\varpi_2}\right)^{b+3/2} \left(b + \frac{1}{2} + \frac{\varpi_3^2}{4 \varpi_2} v\right) \\
&\times e^{- y(s)\sqrt{p} + \beta(t,p) - \beta(s, p) + \varsigma(t,s,p) v}, \nonumber \\
A &=  2^{-1/2 - b} \left(\frac{v}{v_l}\right)^{\varepsilon v_l} e^{-\varepsilon (v-v_l)},
\quad \varpi_2 = \varepsilon + \alpha(s,p) + \frac{g^2(s,p)}{g^2(t,p)} \zeta(t,s,p), \quad
\varpi_3 =  2\frac{g(s,p)}{g(t,p)} \zeta(t,s,p), \nonumber \\
\zeta(t,s,p) &= \frac{g^2(t,p)}{2 (\tau(t,p) - \tau(s,p))}, \quad \varsigma(t,s,p) = \alpha(t,p) -  \zeta(t,s,p) + \frac{\varpi_3^2}{4 \varpi_2}.  \nonumber
\end{align}
The second integral in this expressions is oscillating since it contains complex exponents. Same is true for the integral in the RHS of the definition of $P_1$ in \eqref{finSol}.

There exists a vast literature on computing numerically integrals of the type $\int_{a}^{b} f(x) \mathrm{e}^{\mathrm{i} \omega x} \mathrm{~d} x$. When $\omega$ is large, the integrand becomes highly oscillatory and conventional quadrature programs are ineffective. Various methods have been proposed to address this, mainly based on Filon's algorithm and its modifications, see \citep{Floch2018} and references therein.  For instance, in \citep{Shampine2011} a new method based on a smooth cubic spline is implemented in MATLAB that is both easy to use and effective for large $\omega$. Because the implementation of the basic method is adaptive, the program deals comparatively well with $f(x)$ that have peaks. With the assistance of another method, the program is able to deal effectively with $f(x)$ that have a moderate singularity at one or both ends of $[a, b]$.

Then in \citep{Shampine2012} more complicated integrals $\int_{a}^{b} f(x) e^{i g(x)} d x$ were considered with $g(x)$ being large on $[a, b]$. Previous approaches require users to supply the location and nature of critical points of $g(x)$ and may require $g^{\prime}(x)$. With the new approach proposed in the paper, the program \verb|quadgF| merely asks a user to define the problem, i.e., to supply $f(x), g(x), [a, b]$, and specify the desired accuracy. Though intended only for modest relative accuracy, \verb|quadgF| is very easy to use and solves effectively a large class of problems.

However, our integrals have a more complicated form, hence the above approach cannot be used. Therefore, we rely on a different idea which is exploited in  \verb|Chebfun| package, \citep{Trefethen}. \verb|Chebfun| is an open-source package for computing with functions to about 15-digit accuracy.  The implementation of \verb|Chebfun| is based on the mathematical fact that smooth functions can be represented very efficiently by polynomial interpolation in Chebyshev points, or equivalently, thanks to the Fast Fourier Transform, by expansions in Chebyshev polynomials.
Accordingly, the integrals are normally calculated by an FFT-based version of Clenshaw-Curtis quadrature, as described first in \citep{Gentleman1972}. This formula is applied on each piece of the function (i.e., each smooth piece of the \verb|Chebfun|), and then the results are added up. Various examples provided in \citep{Trefethen} demonstrate the efficiency of this approach in computing oscillating integrals, therefore, we use it in this paper.

Alternative, our experiments show that utilization of Gauss-Kronrod quadratures (\verb|quadgk| function in MATLAB) in our case provides the results that are very close to those of  \verb|Chebfun|) but the elapsed time is lower. Therefore, in all tests we finally make use of \verb|quadgk|. Despite this function is capable to work with the infinite upper limit of the integral, in our case the accuracy of results is not sufficient. Perhaps, various tweaks and tuning can solve this problem, however, here we use truncation of the infinite interval to the corresponding upper limit $\Upsilon = \max(\xi)$. Our experiments indicate, that for relatively short maturities $T < 1$ choice of $\Upsilon = 500$ is good, while for long maturities $T \ge 1$ we have to increase it to  $\Upsilon = 5000$, and for $T \ge 2$  - even more up to $\Upsilon = 20000$.

\section{Numerical experiments} \label{numExp}

In this section we present results of our numerical tests aimed to check the accuracy and speed of the proposed approach. For doing so an explicit form of the model parameters $\kappa(t), \theta(t), \sigma(t), \rho(t), v_0$ should be specified. Since we don't calibrate the model to market quotes, in these tests (without any loss of generality) we choose an artificial (test) dependencies, namely:
\begin{equation} \label{ex}
\theta(t) = \theta_0 e^{-\theta_k t}, \qquad \sigma(t) = \sigma_0 e^{-\sigma_k t},
\qquad \rho(t) = \rho_0, \qquad \kappa(t) = \frac{m \sigma^2(t)}{2 \theta(t)},
\end{equation}
\noindent with $m, \theta_0, \sigma_0, \rho_0, \theta_k, \sigma_k$ being constants.

We compare our results with those obtained by solving \eqref{PDE1} using the FD method described in detail in \citep{Itkin2014b}. In short, this ADI (alternative direction implicit) scheme is of the second order in all dimensions, uses few first Rannacher steps on a non-uniform grid compressed close to the spot $S_0$ and the initial instantaneous variance $v_0$. Parameters of the test are presented in Table~\ref{param}, and a typical FD grid - in Fig.~\ref{grid}.
\begin{table}[!htb]
\begin{center}
\begin{tabular}{|c|c|c|c|c|c|c|c|c|c|c|c|c|}
\hline
$S_0$ & $m$ & $\theta_0$ & $\sigma_0$ & $\rho_0$ & $\theta_k$ & $\sigma_k$ & $L$   & $v_0$ & $r$ & $q$ \\
\hline
60 & 2 & 0.1 & 0.3 & -0.7 & 0.3 & 0.2 & 40 & 0.5 & 0.02 & 0.01  \\
\hline
\end{tabular}
\caption{Parameters of the test.}
\label{param}
\end{center}
\end{table}

\begin{figure}[!htb]
\begin{center}
\hspace*{-0.3in}
\fbox{\includegraphics[width=0.6\textwidth]{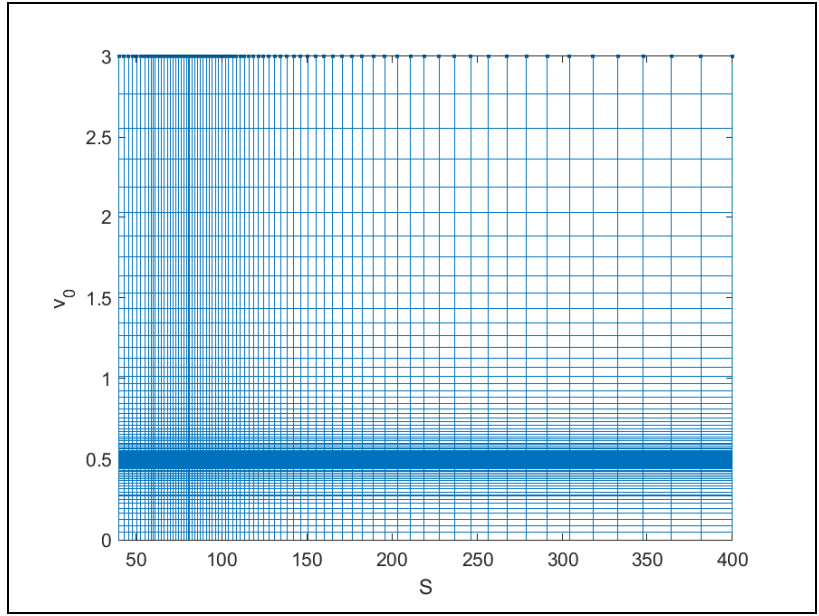}}
\end{center}
\caption{A typical nonuniform grid with 76 nodes in $S$ and 79 nodes in $v$ for the FD scheme with $S_0, \ v_0$ given in Table~\ref{param}.}
\label{grid}
\end{figure}

Validation of the FD method can be done for a European vanilla Put in the Heston model with constant parameters $\kappa_0, \theta_0, \sigma_0, \rho_0, v_0$ since for this model the Put price can be found by FFT. The FFT price computed by using 8192 nodes is 24.9381 and the FD price on the above FD grid is 24.9378 or 12 bps of difference.

To solve the Volterra equation in \eqref{volt} by the RBF method, as the collocation points, similar to \citep{ItkinMuraveySABR},  we choose a uniform grid in $t \in [0,T]$ and $v \in [v_0 - v_m, v_0 + v_m]$ with $V_m = 0.1$. We take $N_t = 10, N_v = 4$. The integrals in time are computed by using the Simpson quadratures. Also, following the discoveries in \citep{ItkinMuraveySABR}, to solve a system of linear equations obtained via the RBF method we  use a \verb+minres+ iterative solver which is good when the matrix is not positive definite, but symmetric. Since here we constructed a non-standard (non-Gaussian) RBF, strictly speaking our matrix is not symmetric. However, our experiments show that it is almost symmetric with the maximum absolute difference between the corresponding elements $|a(i,j) - a(j,i)| \approx 0.001$. Also, our experiments with various iterative solvers clearly indicate that \verb+minres+ provides the best results. The main advantage of this method lies in the fact that it makes it possible to construct an orthogonal basis for the Krylov subspace by three-term recurrence relations, \citep{minres1975}.  We do see that due to small rounding errors some eigenvalues of our RBF matrices are either very small negative numbers or zero. Therefore, in principle, the RBF matrix should be first regularized by using a standard procedure.  However, again, since this error depends on quality of the RBF interpolation it is expected that modern methods which are more stable than the global RBF method would provide better accuracy. Indeed, for the Gaussian RBF method a typical matrix in \eqref{linSys}  has the condition number about $10^{18}$ and, therefore, even iterative methods of solving this system of linear equations produce bigger error than in the case of a well-conditioned matrix.

The best values of $\varepsilon$ in all experiments are given in Table~\ref{tabEps}. It can be seen that these values are almost independent of the strikes and maturities. Surprisingly, these values of $\varepsilon$ are not small as compared with those used in traditional RBF methods. This can be caused by two reasons: i) we use a new non-standard RBF method, and for this method typical values of $\varepsilon$ are not well investigated; ii) we use  approximation of the Kummer function in \eqref{KumApp}, so the results could be sensitive to the choice of $\varepsilon$. In out experiments $b = 1.5, v_0 = 0.5$, so the assumption $b+3/2 \gg \varepsilon v_0$ doesn't entirely hold (i.e., $\varepsilon v_0/(b+3/2) = 0.66$ is not a very small parameter). Therefore, we verified this approximation by using the Kummer function itself instead of the approximation in \eqref{KumApp}, and didn't observe any significant difference in the final results. Thus, at the moment we attribute these values of $\varepsilon$ as being  inherent to the new method.

\begin{table}[!htb]
  \centering
      \begin{tabular}{|l|r|r|r|r|r|r|}
    \toprule
     $K$ & 45 & 50     & 60     & 70     & 80     & 90 \\
\hline
     $\varepsilon$ & 3. & 5.  & 4.  & 4.  & 4.  & 4. \\
    \bottomrule
    \end{tabular}%
\caption{The values of $\varepsilon$ used in the numerical experiments.}
  \label{tabEps}%
\end{table}%

We run the test for a set of maturities $T \in [1/24, 1/12, 0.25,0.5,1,2]$ years and strikes $K \in [45, 50, 60, 70, 80, 90]$. The Down-and-Out barrier Put option prices computed in these experiments are presented in Tab.~\ref{TabComp}. For the FD method the time step is fixed and equal to 0.01 year to preserve the method accuracy in time. Typical elapsed times are also shown in Tab.~\ref{TabComp}. We emphasize that the GIT time shows the total time for all strikes, while the FD time shows the time for one strike (since we run a backward scheme). For the forward scheme this time should be increased since after the density is found by the FD method additional integration with the payoff function for all strikes should be done.  Also in Tab.~\ref{Error} the relative percentage error between the FD and GIT solutions is presented.

\begin{table}[!htb]
  \centering
  \scalebox{0.77}{
    \begin{tabular}{|c!{\vrule width 1.5pt}|r|r|r|r|r|r!{\vrule width 1.5pt}r|r|r|r|r|r|}
    \specialrule{.1em}{.05em}{.05em}
   \hfill $\bf{T}$     & 0.042 & 0.083 & 0.25  & 0.5   & 1 & 2  & 0.042 & 0.083 & 0.25  & 0.5   & 1 & 2\\
    \specialrule{.1em}{.05em}{.05em}
    $\bf{K}$     & \multicolumn{6}{c!{\vrule width 1.5pt}}{\textbf{GIT}} & \multicolumn{6}{c|}{\textbf{FD}} \\
    \specialrule{.1em}{.05em}{.05em}
    45    & 0.0343 & 0.0466 & 0.0352 & 0.0227 & 0.1298 & 0.5616 & 0.0288 & 0.0621 & 0.0450 & 0.0252 & 0.0763 & 0.7668 \\ \hline
    50    & 0.2760 & 0.4390 & 0.3402 & 0.0932 & 0.1642 & 1.2165 & 0.3187 & 0.5249 & 0.3602 & 0.1967 & 0.1596 & 0.7836 \\ \hline
    60    & 2.9707 & 3.3150 & 2.4670 & 1.3160 & 0.3904 & 1.0277 & 3.3253 & 3.6601 & 2.2717 & 1.2213 & 0.6829 & 1.0200 \\ \hline
    70    & 10.0135 & 9.5921 & 6.9704 & 4.0235 & 1.6431 & 1.1620 & 10.3518 & 9.7413 & 5.8385 & 3.3366 & 1.7720 & 1.1771 \\ \hline
    80    & 19.3932 & 18.1225 & 13.3700 & 8.9262 & 4.3228 & 1.6871 & 19.6622 & 17.8838 & 10.8061 & 6.3585 & 3.4695 & 1.9798 \\ \hline
    90    & 29.2469 & 27.4915 & 20.9084 & 14.8801 & 7.5348 & 2.7396 & 29.4417 & 26.8270 & 16.6750 & 10.1083 & 5.6790 & 3.2311 \\
  \specialrule{.1em}{.05em}{.05em}
    Elapsed time & 2.45      & 2.02    & 1.91   &  1.87   & 2.19  & 2.53  & 0.13 & 0.23  & 0.65  & 1.3   & 2.7 & 5.4  \\
    \specialrule{.1em}{.05em}{.05em}
    \end{tabular}%
  }
  \caption{Comparison of Down-and-Out Put option prices for the Heston model obtained by the FD and GIT methods and elapsed time in secs.}
  \label{TabComp}%
\end{table}%

\begin{table}[!htb]
\centering
\begin{tabular}{|c|r|r|r|r|r|r|}
\specialrule{.1em}{.05em}{.05em}
& \multicolumn{6}{c|}{\textbf{T}} \\
\specialrule{.1em}{.05em}{.05em}
\textbf{K} & 0.042 & 0.083 & 0.25 & 0.5 & 1.0 & 2.0 \\
\specialrule{.1em}{.05em}{.05em}
 45 & -19.10 & 24.96 & 21.78 & 9.92  & -70.12 & 26.76 \\ \hline
50  & 13.40 & 16.37 & 5.55  & 52.62 & -2.88 & -55.25 \\ \hline
60  & 10.66 & 9.43  & -8.60 & -7.75 & 42.83 & -0.75 \\ \hline
70  & 3.27  & 1.53  & -19.39 & -20.59 & 7.27  & 1.28 \\ \hline
80  & 1.37  & -1.33 & -23.73 & -40.38 & -24.59 & 14.78 \\ \hline
90  & 0.66  & -2.48 & -25.39 & -47.21 & -32.68 & 15.21 \\
\specialrule{.1em}{.05em}{.05em}
\end{tabular}%
\caption{The relative percentage error of the GIT solution as compared with the FD one.}
\label{Error}%
\end{table}%

Looking into Tab.~\ref{Error} one can see that for our problem the relative error of the GIT method as compared with the FD reference solution varies across strikes and maturities. For large strikes and short maturities the error is of order of few percents, while for intermediate maturities it is in the range [20\%,40\%]. However, for some strikes, say ATM, it is about 8\%.  For large maturities, as mentioned, we need to use more integration points while the error varies from few percents and up to 2-30\% depending on the strike. Big relative error at $K=45$ and high maturities is due to the small price value, hence even small absolute errors could produce high relative errors.

\section{Discussion} \label{Discus}

In this paper we proposed a generalization of the GIT method to price Down-and-Out barrier Put options $P_\mathrm{do}$ under the Heston stochastic volatility model where all coefficients and the barrier are deterministic functions of the time (subject to the condition \eqref{cond1}). The method requires solving a two-dimensional mixed Volterra-Fredholm equation for the gradient $\Phi\left(t, v \right)$ of the solution at the moving boundary $x = y(t)$. Once it is found, the option price $P_\mathrm{do}$ follows since it was expressed in a semi-analytical form via a two-dimensional integral of $\Phi\left(t, v \right)$. Note, that this integral is computed as a part of the system matrix $\|A\|$ for the LMVF equation, and hence doesn't require extra time.

Note that barrier options trade in over-the-counter markets for many reasons. In this paper we focus on a Down-and-Out Put, but in-out parity implies that our results easily apply to a Down-and-In Put as well. Indeed, by using the barrier options parity, the price of the Down-and-In barrier Put option $P_\mathrm{di}$ can be found as $P_\mathrm{di} = P_\mathrm{van} - P_\mathrm{do}$, where $P_\mathrm{van}$ is the price of the European vanilla Put option in the Heston model. Since for the Heston model a closed-form solution for European options (via an FFT transform) is known, our solution for $P_\mathrm{di}$ also provides a closed form solution for $P_\mathrm{di}$. For the Up-and-Out barrier Put option $P_\mathrm{uo}$ a simple change of variables $x \to -x$ reduces the pricing problem to that one which we consider in this paper. Therefore, the price of the Up-and-In barrier Put option can be found by using the barriers option parity. The Call options can be priced in a similar way by using a covered Call instead of a Put.

One of the possible reasons that an investor might buy a Down-and-In Put is to lock in the premium. Suppose that an investor plans to buy a vanilla Put for some strike $K$ and maturity $T$ if the underlying drops to some level $L < K$  before $T$. Even if we condition on the underlying asset's price being below $L$ before $T$, the premium that will be paid for the vanilla Put is random due to stochastic volatility and due to the possibility that the barrier is crossed, not touched. In contrast, if an investor buys a Down-and-In Put initially instead, then a known premium is paid at inception and there are no further cash payments when the barrier is touched or crossed. The investor is in essence pre-paying to remove a random entry cost.

As shown at the end of Section~\ref{solVolt}, since the dependence of $\Phi\left(t, v \right)$ on the strike $K$ appears only in the LHS of the LMVF equation \eqref{volt}, the gradient $\Phi\left(t, v \right)$ for different strikes can be found by solving a single system of linear equations with multiple RHS. Also, taking $T$ large enough (for stock and index options traded at the market $T \le 1$ year, so we can choose, e.g. $T < T_* =2$ years) one can solve the LMVF equation, and find $\Phi\left(t, v \right)$ for all $t \le T_*$ in one sweep. Then the barrier option prices can be obtained for all maturities $T \le T_*$ by computing the RHS in \eqref{finSol}. Therefore, our method is similar to solving the forward PDE (where the density of the underlying can be found in one sweep and then the option prices for various $K$ and $T$ come by integrating this density with the payoff), rather than the backward one.

 As far as the method performance is concerned, for the GIT method the elapsed time in Tab.~\ref{TabComp} represents the computational time for one strike. As we mentioned, the method could be naturally parallelized since the RBF matrix is independent of the strike values. Therefore, in principle, this time should be divided by the number of strikes (seven in our case). But for the FD method the elapsed time also represents the computational time for the backward scheme. Switching to the forward scheme (solution of the forward equation) could reduce this time in the same way as the GIT elapsed time. Therefore, regardless whether we use the forward scheme of the backward scheme the elapsed time of both methods is of the same order at high maturities, while at small maturities the FD method is faster.

However, performance of our method is totally determined by the performance of the numerical method we use to solve the LMVF equation. We see that computing elements of the matrix $\|A\|$ takes 90\% of the total elapsed time as it should be since this matrix is dense (a known deficiency of the global RBF method). Obviously, by using localized versions of the RBF method, the elapsed time can be significantly improved. However, we don't consider this approach in detail in this paper and leave it for future research. Also, better methods of computing oscillating integrals with less number of points could significantly accelerate our approach. For instance, when changing $\Upsilon$ from 500 to $50$, a typical elapsed time drops down to 0.2 secs, i.e. becomes comparable with that one of the FD method at short maturities (for long maturities we still need more points).

In our numerical experiments we used the number of points in time $M=10$. We emphasize that the value of M is actually determined by steepness of the time-dependent coefficients. If the coefficients are smooth in time, the choice of $M=10$ is sufficient. However, if they are a fast-varying functions of the time, one has to take more temporal points. The latter will definitely slow down the method performance. Again, this depends on what kind of the numerical method for solving the LMVF equation is in use, e.g. the global vs a localized RBF.

A formal (theoretical) comparison of our approach with the FD method reveals the following. The FD method requires a 3D grid for temporal $t$ and two spatial $x,v$ variables. In our method, since we derived a semi-analytical expression for the barrier  option price, we need a 2D grid in $(t,v)$ to solve the LMVF equation numerically. Therefore, we dropped off one dimension that gives rise to acceleration of computations. On the other hand, we have to compute highly oscillating integrals that may take time. Also, integrands in the LMVF equation require computation of elementary functions, like $\sin, \cos, \tan, \exp$ while computing a FD matrix requires just simple operations. In both methods the system matrix can be banded: for the FD method this is very natural; for the RBF method this can be achieved by using a localized version of the method. Also, in our method we do integration in time by using high-order quadratures (the Simpson rule) with accuracy $O((\Delta t)^4)$ while the FD method usually provides $O((\Delta t)^2)$. Therefore, we can reduce the number of points in time as compared with the FD grid. Certainly, meshless (e.g., RBF) numerical methods could also be used for solving the pricing PDE. Then the main difference of two approaches remains the same: our problem has one dimension less, but requires computing oscillating integrals dependent on some elementary functions.

As we have already mentioned in various papers about the GIT method (see, e.g., \citep{ItkinLiptonMuraveyBook}),
computation of option Greeks can be done in a similar manner as the option prices.  That is because the GIT method provides the option price in a semi-analytical form (via integrals). Therefore, the explicit dependence of prices on the model parameters is available via differentiation of the option price with respect to a necessary parameter (a simple differentiation under the integrals). Thus, the values of Greeks can be calculated simultaneously with the prices almost with no increase in time. Indeed, differentiation just slightly changes the integrands, and these changes could be represented as changes in weights of the quadrature scheme used to numerically compute the integrals. However, from the computational speed point of view the most challenged piece is computation of densities which contain special functions. These densities can be saved and then reused for computation of Greeks.

Finally, the proposed method can also be applied to any uncorrelated SV model if the Green's function of the instantaneous variance process is known in closed form.  Here we employed \eqref{Green} - the Green's function of the one-dimensional Bessel process since the CIR model for $v_t$ can be transformed to this process. But other popular choices, e.g. the lognormal process can be treated in the same way. Thus, our approach is general enough to deliver semi-analytical prices of barrier options for many SV models.

However, for the correlated SV model we strongly depend on the exponential form of the GIT in \eqref{GITdef}. If a similar (exponential) transform can be constructed to obtain a closed form representation of the image $\bar{u}(t,v,p)$ (e.g., that one in \eqref{sol1}), then our machinery should work given the Green's function of the instantaneous variance is known. Otherwise, this remains to be an important yet open question whether this is possible.

\section*{Acknowledgments}

We are grateful to Alex Lipton and Fazlollah Soleymani for some useful discussions. Dmitry Muravey acknowledges support by the Russian Science Foundation under the Grant number 20-68-47030.



\appendixpage
\appendix
\numberwithin{equation}{section}
\setcounter{equation}{0}

\section{Derivation of the \eqref{sol1}} \label{app2}

Based on the solution for $\barU(\tau, z)$ found in \eqref{sol}, and the definition of $\barU(\tau, z)$ in \eqref{trCIR1}, the second integral in \eqref{sol} can be represented as

\begin{align*}
I_2(t,z, p) &= e^{\alpha(t, p) v + \beta(t,p)}\int_0^\tau \int_0^\infty G(\tau-k, z, \zeta) \Psi(k,\zeta,p) dk \, d\zeta.
\end{align*}
To return to the original variables $(t,v)$ we make transformations
\begin{align*}
k \mapsto \frac{1}{4}\int_{s}^T g^2(\gamma,p) \sigma^2(\gamma) d\gamma,
\quad dk \mapsto - \frac{1}{4} g^2(s,p) \sigma^2(s) ds,
\end{align*}
\noindent and recall that
\begin{equation*}
\Psi(t, z, p) = \frac{\Phi_1(t,z,p)}{\sigma^2(t)} =
-\frac{4 v}{g^2(t,p) \sigma^2(t)} e^{-y(t) \sqrt{p} - \beta(t,p) - \alpha(t,p) v} \Phi(t,v), \quad v = \frac{z^2}{g^2(t,p)}, \ dz = \frac{1}{2\sqrt{v}} g(t,p) dv.
\end{equation*}
Therefore,
\begin{align*}
I_2(t,v,p) &= - \frac{1}{4} e^{\alpha(t,p) v + \beta(t,p)} \int_t^T \int_0^\infty G\left(\tau - \frac{1}{4}\int_s^T g^2(\gamma, p)\sigma^2(\gamma) d\gamma, z, z'\right) g^2(s,p) \Phi_1(s,z',p) ds \,dz' 	\nonumber 	\\
&= e^{\alpha(t,p) v + \beta(t,p)}\int_t^T \int_0^\infty v' \Bigg\{ G\left(\frac{1}{4}\int_t^s g^2(\gamma, p)\sigma^2(\gamma) d\gamma, \sqrt{v}g(t,p), z'\right)	\\
&\times e^{-y(s)\sqrt{p}- \left[\beta(s, p) + \frac{(z')^2}{g^2(s,p)}\alpha(s,p)\right]} \Phi\left(s, \frac{(z')^2}{g^2(s, p)}\right) \Bigg\} ds \, dz' \nonumber \\
&= \frac{1}{2}\int_t^T \int_0^\infty \sqrt{v'} g(s,p) \Bigg\{ G\left(\int_t^s \frac{1}{4} g^2(\gamma, p)\sigma^2(\gamma) d\gamma, g(t,p) \sqrt{v}, g(s, p) \sqrt{v'} \right) 	\\
&\times e^{-y(s)\sqrt{p} +\alpha(t,p) v + \beta(t,p)- \left[\beta(s, p) + v'\alpha(s,p)\right]} 	\Phi\left(s, v' \right) \Bigg\}ds \, dv'. \nonumber
\end{align*}

\section{Connection to the Fourier-sine transform} \label{appSinFT}

The classical Fourier-sine transform can be applied to functions defined on the positive real semi-axis. However, it can be easily generalized to the functions defined on $[y(t),\infty)$. Indeed, using the simple phase shift $x \mapsto x + y$ yields the following transform
\begin{equation} \label{shiftedSinedef}
\bar v(\xi) = \frac{2}{\pi} \int_{y(t)}^\infty v(x) \sin \left(\xi [x-y(t)]\right)  dx, \qquad v(x) =  \int_0^\infty \bar v(\xi) \sin \left(\xi [x-y(t)]\right)  dx.
\end{equation}
The integral in \eqref{invTr} is the inverse transform of the type \eqref{shiftedSinedef}, therefore the function $\chi(\xi ,t, v)$ can be found by the direct transform
\begin{equation*} \label{chiIFTdef}
	\chi(\xi ,t, v) = \frac{2}{\pi} \int_{y(t)}^\infty  P(t,x,v) \sin \left(\xi [x-y(t)]\right ) dx.
\end{equation*}
Applying the connection formula between the sine and the hyperbolic sine
\begin{equation*}
\chi(\xi,t, v) = \frac{1}{ \iu \pi }\int_{y(t)}^\infty P(t,x,v) \left[ e^{\iu \xi [x-y(t)]} - e^{-\iu \xi [x-y(t)]} \right] dx,
\end{equation*}
\noindent and using \eqref{GITdef} with $\sqrt{p} = \pm \iu \xi$ we obtain \eqref{chi1}.

\section{Behavior of the solution \eqref{finSol} at $\bm{v \to \infty}$.} \label{app1}

Obviously, the solution $u(t,x,v)$  in \eqref{finSol} should be finite at $v \to \infty$. This can be achieved by choosing an appropriate terminal condition $\alpha(T,p)$. Below we deal with both integrals in the RHS of \eqref{finSol} and analyze them separately.

\paragraph{The first integral.} Let us show that the terminal condition $\alpha(T,p) = 0$ is sufficient for the first integral to converge. With this condition we have $B_1=0$ and, hence, the only term that depends on $v$ is $e^{\gamma(t,p) v}$ where $ \sqrt{p} = -\iu \xi$.

Suppose that $\alpha(t,p)$, which solves the Riccati equation \eqref{ric2}, can be represented as $\alpha(t,p) = \alpha_R(t,p) + \iu \alpha_I(t,p), \ \alpha_R \in \mathbb{R}, \ \alpha_I \in \mathbb{R}$. Also, suppose that the remaining part of the integrand under the first integral (without the term $e^{\gamma(t,p) v}$) can be represented as $C(t,p) = C_R(t,p) + \iu C_I(t,p), \ C_R \in \mathbb{R}, \ C_I \in \mathbb{R}$. By simple arithmetic the imaginary part of the whole integrand is $e^{\alpha_R v} [C_I \cos( \alpha_I v) + C_R \sin(\alpha_I v)]$. Thus, the first integral converges if $\alpha_R \le 0$.

Let consider \eqref{ric2} when $\sqrt{p} = -\iu \xi$. Since $\alpha(t,p)$ is complex, the Riccati equation can be written separately for the real $\alpha_R$ and imaginary $\alpha_I$ parts. It is easy to see that $\alpha_R(t,e^{-\iu \pi} \xi^2)$ solves the equation
\begin{align} \label{ricAlphaR}
\alpha'_R &= \frac{1}{2} D(t,\xi) + \kappa(t) \alpha_R - \frac{1}{2}\sigma^2(t) \alpha^2_R, \quad \alpha_R(T) = 0, \\
D(t,\xi) &= \xi^2 + 2 \rho(t) \sigma(t) \xi \alpha_I +  \sigma^2(t) \alpha_I^2 \in \mathbb{R}. \nonumber
\end{align}

Observe that $D(t,\xi) \ge 0$ because $\rho(t) \in [-1,1]$ and
\begin{equation*}
(\xi - \sigma(t) \alpha_I)^2 \le D(t,\xi) \le (\xi + \sigma(t) \alpha_I)^2.
\end{equation*}
Also, observe that if $D(t,\xi)=0$, \eqref{ricAlphaR} has a closed form solution $\alpha_R(t,\xi) = 0$. This is because
this solution obeys both the equation and the terminal condition (and is the reason why this terminal condition has been chosen). Now, it can be checked that returning the term $D(t,\xi) \ge 0$ back into \eqref{ricAlphaR} we decrease the solution for $t < T$. Since by the terminal condition $\alpha(T,\xi) = 0$, this means that $\alpha_R(t,\xi)$ is nonpositive $\forall t \in [0,T]$. Thus, the first integral well behaves at $v \to \infty$. Exactly same analysis is valid for
$\sqrt{p} = \iu \xi$.

\paragraph{The second integral.} In the second integral we have two competitive terms which depend on $v$, this is $e^{\alpha(t,p) v}, \ \sqrt{p} = \iu \xi$ and the Green function as a function of $g(t,p) \sqrt{v}$. Using the definition of the Green function in \eqref{Green} and computing its asymptotic at $v \to \infty$ we obtain, \citep{as64}
\begin{equation*}
\lim_{v \to \infty} G\left(\frac{1}{4}\int_t^s g^2(\zeta, p) \sigma^2(\zeta) d \zeta, \, g(t,p) \sqrt{v},\, g(s, p)
\sqrt{v'} \right) \propto \exp\left[- \frac{1}{2 \tau} g^2(t, e^{-\iu \pi} \xi^2) v \right].
\end{equation*}
Again, using the same logic as for the first integral and having in mind that $\alpha_R \le 0, \ \Ree (g^2(t,p)/\tau) > 0$ we can conclude that this integral also converges at $v \to \infty$.

\section{Positive definiteness of the function $\bar{\Theta}(t,\nu)$} \label{appPD}

Here we prove that the basis function $\bar{\Theta}(t,\nu)$ proposed in \eqref{newGauss} is positive definite. As per \citep{Fasshauer}, a complex-valued continuous function $\Theta: \mathbb{R}^{s} \rightarrow \mathbb{C}$ is called positive definite on $\mathbb{R}^{s}$ if
\begin{equation} \label{defPD}
\sum_{j=1}^{N} \sum_{k=1}^{N} c_{j} \overline{c_{k}} \Theta\left(\boldsymbol{x}_{j}-\boldsymbol{x}_{k}\right) \geq 0,
\end{equation}
\noindent for any $N$ pairwise different points $\boldsymbol{x}_{1}, \ldots, \boldsymbol{x}_{N} \in \mathbb{R}^{s}$, and $\boldsymbol{c}=\left[c_{1}, \ldots, c_{N}\right]^{T} \in \mathbb{C}^{N}$. The function $\Theta$ is called strictly positive definite on $\mathbb{R}^{s}$ if the quadratic form \eqref{defPD} is zero only for $\boldsymbol{c} \equiv \mathbf{0}$.

By the Bochner theorem \citep{Bochner1932}, a (complex-valued) function $\Theta \in C\left(\mathbb{R}^{s}\right)$ is positive definite on $\mathbb{R}^{s}$ if and only if it is the Fourier transform of a finite non-negative Borel measure $\mu$ on $\mathbb{R}^{s}$. Real valued functions are a special case of this theorem which is covered by the following
Corollary, \citep{Fasshauer, Wendland2005}:
\begin{corollary}
Let $f$ be a continuous non-negative function in $L_{1}\left(\mathbb{R}^{s}\right)$ which is not identically zero. Then the Fourier transform of $f$ is strictly positive definite on $\mathbb{R}^{s}$.
\end{corollary}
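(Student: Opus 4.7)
The plan is to deduce the Corollary from Bochner's theorem (already stated above) together with a direct Fubini manipulation, and then to upgrade positive definiteness to \emph{strict} positive definiteness using continuity of $f$ and linear independence of characters.

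First I would observe that since $f \geq 0$ lies in $L_1(\mathbb{R}^s)$, the assignment $d\mu(\omega) := f(\omega)\, d\omega$ is a finite non-negative Borel measure whose Fourier transform is $\hat f$, so Bochner's theorem immediately gives positive definiteness. To prepare for the strict version, I would write out the quadratic form explicitly: for pairwise distinct points $x_1, \dots, x_N \in \mathbb{R}^s$ and coefficients $c \in \mathbb{C}^N$, Fubini's theorem yields
\begin{equation*}
Q := \sum_{j,k=1}^N c_j \overline{c_k}\, \hat f(x_j - x_k) = \int_{\mathbb{R}^s} f(\omega)\, |P(\omega)|^2\, d\omega, \qquad P(\omega) := \sum_{j=1}^N c_j\, e^{-2\pi i\, x_j \cdot \omega},
\end{equation*}
which is manifestly non-negative and makes the rest of the argument transparent.

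For the strict part, I would suppose $Q = 0$. Because the integrand $f(\omega)\, |P(\omega)|^2$ is non-negative and continuous, it must vanish identically; continuity of $f$ and $f \not\equiv 0$ produce a non-empty open set $U := \{\omega : f(\omega) > 0\}$, and on $U$ we must have $P \equiv 0$. Since $P$ extends to an entire function on $\mathbb{C}^s$, vanishing on the open real set $U$ forces $P \equiv 0$ on all of $\mathbb{R}^s$ by analytic continuation. It then remains to deduce $c \equiv 0$ from $P \equiv 0$, which is the classical fact that the characters $\{\omega \mapsto e^{-2\pi i\, x_j \cdot \omega}\}_{j=1}^N$ are linearly independent whenever the $x_j$ are pairwise distinct; I would prove this by restricting $\omega$ to a line $t\, e$ where the unit vector $e$ is chosen so that the scalars $\{x_j \cdot e\}$ are distinct (which is generic), and then invoking nonsingularity of the resulting Vandermonde matrix in $e^{-2\pi i\, t\, (x_j \cdot e)}$.

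The main obstacle is this last linear-independence step: the Fubini/Bochner part is essentially bookkeeping, and the reduction to vanishing on $\{f>0\}$ is immediate from continuity, but extracting $c \equiv 0$ from an identically vanishing exponential sum requires a genuine (if classical) argument, with care taken to reduce the multidimensional problem to a one-dimensional Vandermonde argument via a suitably chosen generic line.
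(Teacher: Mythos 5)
Your proof is correct and follows essentially the same route as the paper: both reduce the Corollary to Bochner's theorem applied to the finite measure $d\mu(\omega)=f(\omega)\,d\omega$, and both derive strictness from the fact that $\{f>0\}$ is a nonempty open set. The only difference is one of completeness — the paper simply invokes the refined Bochner statement (that a measure whose carrier has positive Lebesgue measure yields a \emph{strictly} positive definite transform) as a known fact, whereas you make this step self-contained by writing $Q=\int f\,|P|^2$, deducing $P\equiv 0$ on $\mathbb{R}^s$ by analytic continuation from $\{f>0\}$, and then killing the coefficients via linear independence of distinct characters (the generic-line/Vandermonde reduction); that last lemma is exactly what is hiding behind the paper's citation.
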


The proof is based on the fact that this is a special case of the Bochner theorem in which the measure $\mu$ has Lebesgue density $f$. Thus, we use the measure $\mu$ defined for any Borel set $B$ by
\[ \mu(B)=\int_{B} f(\boldsymbol{x}) \mathrm{d} \boldsymbol{x}. \]
Then the carrier of $\mu$ is equal to the (closed) support of $f$. However, since $f$ is non-negative and not identically equal to zero, its support has positive Lebesgue measure, and hence the Fourier transform of $f$ is strictly positive definite by the Bochner theorem.

Using the above facts, we formulate the following statement
\begin{theorem}
Suppose variables $[t, T, \nu, \nu_l, \epsilon]$ are real, $[\nu, \nu_l, \epsilon] \in [0,\infty), \ t \in [0,T], \ T > 0]$  and consider a function $\bar{\Theta}_{kl}(t,\nu): \mathbf{R} \to \mathbf{R}$ defined as
\begin{equation} \label{newGauss1}
\bar{\Theta}_{kl}(t,\nu) = \left(\frac{\nu}{\nu_l}\right)^{2 \varepsilon \nu_l^2}  e^{-\varepsilon \left[\nu^2 - \nu_l^2 + (t - t_k)^2 \right]} + w \delta(\nu),
\end{equation}
\noindent where $\delta(\nu)$ is the Dirac delta function, and $w > 0$. Then the coefficient $w$ can be chosen such that $\bar{\Theta}_{kl}(t,\nu)$ is positive definite.
\end{theorem}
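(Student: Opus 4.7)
The plan is to apply the Bochner-type Corollary stated just before the theorem: a continuous non-negative function in $L_1(\mathbb{R}^s)$ has a strictly positive-definite Fourier transform. With $(t_k,\nu_l)$ regarded as fixed parameters, I factorise the kernel as a tensor product plus a delta correction,
\[ \bar\Theta_{kl}(t,\nu)=F(\nu;\nu_l)\,G(t-t_k)+w\,\delta(\nu),\quad G(\tau)=e^{-\varepsilon\tau^{2}},\quad F(\nu;\nu_l)=\bigl(\nu/\nu_l\bigr)^{2\varepsilon\nu_l^{2}}\,e^{-\varepsilon(\nu^{2}-\nu_l^{2})}, \]
and use the fact that a tensor product of strictly positive-definite kernels is strictly positive-definite. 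It is therefore enough to establish the property in each coordinate separately.

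For the $t$-factor, $\widehat G(\omega_t)=\sqrt{\pi/\varepsilon}\,e^{-\omega_t^{2}/(4\varepsilon)}$ is everywhere positive, so the Corollary applies immediately. For the $\nu$-factor I Taylor-expand $\log F(\nu;\nu_l)$ about $\nu=\nu_l$; the $O(\nu-\nu_l)$ contributions of the two exponents cancel and leave
\[ F(\nu;\nu_l)=e^{-2\varepsilon(\nu-\nu_l)^{2}}\exp R(\nu,\nu_l),\qquad R=O\bigl((\nu-\nu_l)^{3}/\nu_l\bigr). \]
The polynomial blow-up $\nu^{2\varepsilon\nu_l^{2}}$ near the origin is integrable and the super-Gaussian decay at infinity dominates $R$, so $F(\cdot;\nu_l)\in L_1(0,\infty)$. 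Hence $\widehat F$ is continuous, bounded, and vanishes at infinity by Riemann--Lebesgue, and $C:=\sup_{\omega_\nu}|\widehat F_-(\omega_\nu)|$ is finite. Since $\widehat{w\delta}(\omega_\nu)=w$ is a positive constant, choosing $w\ge C$ forces $\widehat F+w\ge 0$, and the Corollary then yields strict positive definiteness of the composite $\nu$-kernel. Combining the two coordinates completes the argument.

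The main obstacle is the explicit control of $\widehat F_-$, because the power $(\nu/\nu_l)^{2\varepsilon\nu_l^{2}}$ destroys translation invariance in $\nu$ and $\widehat F$ cannot simply be read off as a Gaussian. I expect to resolve this either (i) analytically, computing $\widehat F$ by contour integration after the substitution $u=\log\nu$, which turns $F$ into a double-exponential tractable by saddle-point methods, or (ii) through the cruder envelope $|\widehat F|\le \|F\|_{L_1}$, which yields an admissible (though possibly far-from-optimal) $w$ from the explicit identity
\[ \int_0^{\infty}\nu^{2\varepsilon\nu_l^{2}}e^{-\varepsilon\nu^{2}}\,d\nu=\tfrac12\,\varepsilon^{-(\varepsilon\nu_l^{2}+1/2)}\,\Gamma\bigl(\varepsilon\nu_l^{2}+\tfrac12\bigr). \]
A secondary technicality is that $C$ depends on $\nu_l$, so for a practical collocation with finitely many centers one needs a uniform bound; the same Gamma-function identity provides this on any bounded range of $\nu_l$, so a single $w$ can be chosen that works for the full collocation set.
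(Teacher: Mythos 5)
Your underlying strategy is the same as the paper's --- exhibit the inverse Fourier transform of $\bar\Theta_{kl}$ in the $\nu$-variable as a nonnegative quantity and invoke the Bochner-type Corollary --- but your route to positivity is different and in one respect better. The paper writes the inverse transform explicitly in terms of a Kummer confluent hypergeometric function $M\left(\frac12+\varepsilon\nu_l^2;\frac12;-\omega^2/(4\varepsilon)\right)$ and then argues its positivity semi-heuristically, relying on plots and conceding that the argument ``is not 100\% rigorous in its last part.'' Your crude bound $|\widehat F|\le\|F\|_{L^1}$, together with the Gamma-function evaluation of the $L^1$ norm, gives a concrete (if conservative) value of $w$ that rigorously forces $\widehat F+w\ge 0$ without any numerical evidence, which is a genuine improvement in rigor over the published proof.

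There is, however, a gap in your tensor-product reduction. You write $\bar\Theta=F(\nu)G(t-t_k)+w\,\delta(\nu)$ and then claim positive definiteness ``in each coordinate separately,'' but $\bar\Theta$ is not a tensor product: the delta term carries no $G(t-t_k)$ factor, and $\bar\Theta$ differs from the genuine product $\bigl(F(\nu)+w\,\delta(\nu)\bigr)\otimes G(t-t_k)$ by $w\,\delta(\nu)\bigl(1-G(t-t_k)\bigr)$, whose $t$-factor $1-G$ is \emph{not} positive definite in $t$. If you insist on two-dimensional positive definiteness, the two-dimensional Fourier transform of your decomposition is $\widehat F(\omega_\nu)\,\widehat G(\omega_t)+w\,c\,\delta(\omega_t)$; the point mass supported on $\{\omega_t=0\}$ cannot compensate for negativity of $\widehat F(\omega_\nu)$ at $\omega_t\neq 0$, so no finite $w$ would work and the two-dimensional claim would simply be false. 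The paper sidesteps this by reading the theorem one-dimensionally: the Fourier transform in its proof is taken only in $\omega\to\nu$, with $t$ held as a fixed parameter, so $G(t-t_k)$ is merely a positive constant in $(0,1]$ and the worst case is $t=t_k$. Restating your $\nu$-step in that setting --- the kernel is $cF(\nu)+w\,\delta(\nu)$ with $c=G(t-t_k)\le 1$ --- your $L^1$-bound argument then closes the proof, and does so more rigorously than the paper's.
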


\begin{proof}
Let $\omega \in \mathbf{R}$ and consider the function
\begin{align}
F(\omega) &= \frac{1}{\sqrt{2 \pi}} + A \varepsilon^{-\frac{1}{2} - \varepsilon \nu_l^2} \Gamma \left(\frac{1}{2}  + \varepsilon  \nu_l^2 \right) \, M\left(\frac{1}{2}  + \varepsilon  \nu_k^2; \frac{1}{2}; - \frac{\omega^2}{4\varepsilon} \right), \\
A &= \frac{1}{\sqrt{2 \pi}} e^{-\varepsilon [(t-t_k)^2 - \nu_l^2]} {\nu_l}^{-2 \varepsilon \nu_l^2}, \nonumber
\end{align}
\noindent where $\Gamma(x)$ is the gamma function, and $M(a,b,z)$ is the Kummer confluent hypergeometric function, \citep{as64}.
It can be directly checked that the Fourier transform ${\cal F}(F: \omega \to \nu)$ of $F(\omega)$ is equal to $\bar{\Theta}_{kl}(t,\nu)$. Therefore, to prove the theorem we need to show that $F(\omega)$ takes all values in $[0,\infty)$.

Since $\varepsilon > 0, \nu_l \ge 0$ we have $A > 0$ and $\Gamma \left(\frac{1}{2}  + \varepsilon  \nu_l^2 \right) > 0$, the minimum of $F(\omega)$ in $\omega$ is reached at the first zero of the Kummer function $M\left(\frac{3}{2}  + \varepsilon  \nu_l^2; \frac{3}{2}; - \frac{\omega^2}{4\varepsilon} \right)$ which is the derivative of $M\left(\frac{1}{2}  + \varepsilon  \nu_l^2; \frac{1}{2}; - \frac{\omega^2}{4\varepsilon} \right)$ with respect to $\omega$ (note that $F(\omega)$ is even). This is because when $\omega$ increases,  $F(\omega)$ rapidly vanishes
\footnote{
Note, that, e.g., in Wolfram Mathematica the value of $M(a,b,x)$ is computed incorrectly, when  $d$ is a high negative integer and $z$ is high. We remind that by using Kummer transformation, we have
\begin{equation*}
M\left(\frac{1}{2}  + \varepsilon  \nu_l^2; \frac{1}{2}; - \frac{\omega^2}{4\varepsilon} \right) =
e^{- \frac{\omega^2}{4\varepsilon}} M\left(- \varepsilon  \nu_l^2; \frac{1}{2};  \frac{\omega^2}{4\varepsilon} \right).
\end{equation*}
When $\varepsilon  \nu_l^2 \in \mathbb{N}$, the Kummer function becomes a Laguerre polynomial
\begin{equation*}
L_{n}^{(\alpha)}(x)=\left(\begin{array}{c}
n+\alpha \\
n
\end{array}
\right) M(-n, \alpha+1, x)=\frac{(\alpha+1)_{n}}{n !}{ }_{1} F_{1}(-n, \alpha+1, x)
\end{equation*}
\noindent where $(a)_{n}$ is the Pochhammer symbol, \citep{as64}. The value computed via this formula produces the correct result in Mathematica, which, e.g., shows that $M\left(\frac{1}{2}  + \varepsilon  \nu_l^2; \frac{1}{2}; - \frac{\omega^2}{4\varepsilon} \right)$ doesn't have a singularity when $\varepsilon  \nu_l^2 \in \mathbb{N}$. This can be compared with the direct computation of $M\left(\frac{1}{2}  + \varepsilon  \nu_l^2; \frac{1}{2}; - \frac{\omega^2}{4\varepsilon} \right)$ by replacing it with $M\left(\frac{1}{2} + \epsilon  + \varepsilon  \nu_l^2; \frac{1}{2}; - \frac{\omega^2}{4\varepsilon} \right), \ 0 < \epsilon \ll 1$, because in this case Mathematica provides the correct result.
}.

\end{proof}

On the other hand, the minimum of $F(\omega)$ in variable $\varepsilon$ is reached at $\varepsilon \to 0$. This can be checked by differentiating $F(\omega)$ with respect to $\varepsilon$, setting $\varepsilon$ = 0, and taking into account that
\[   \lim_{\varepsilon \to 0} M\left(\frac{3}{2}, \frac{3}{2}, -\frac{x^2}{4 \varepsilon} \right) = 0.  \]
\begin{figure}[t]
\begin{center}
\hspace*{-0.3in}
\subfloat[]{\includegraphics[width=0.45\textwidth]{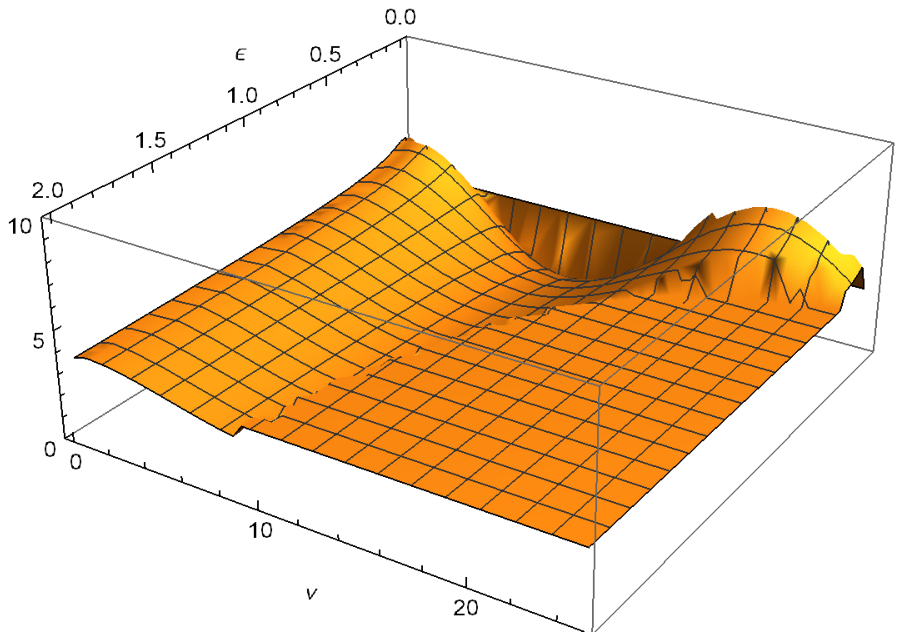}}
\subfloat[]{\includegraphics[width=0.45\textwidth]{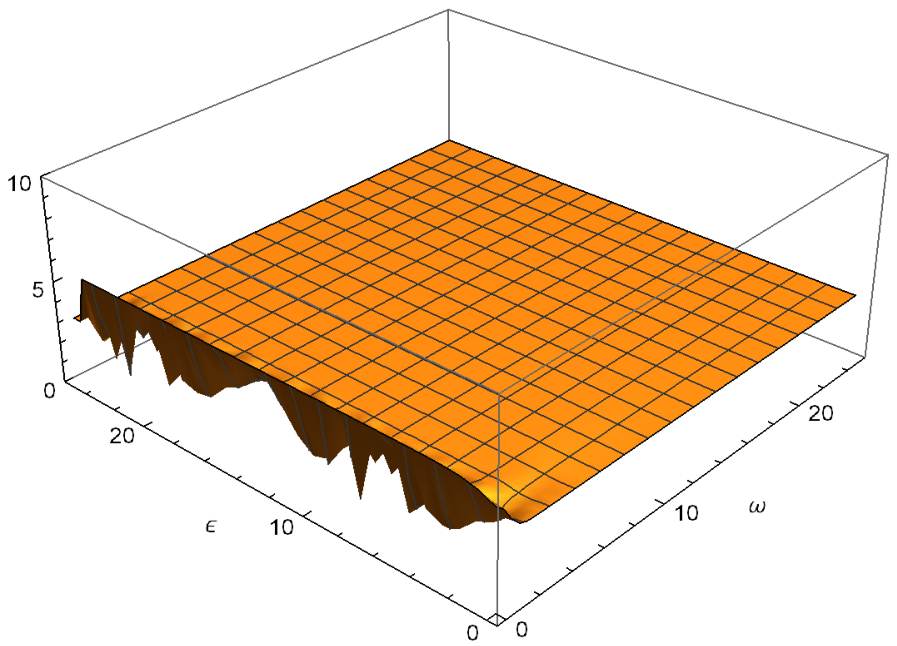}}
\end{center}
\caption{The behavior of the function $F(\omega)$;  (a) - in coordinates $(\varepsilon, \nu_l)$, (b) - in coordinates $(\omega, \nu_l)$ at $w = 8$.}
\label{Fplot}
\end{figure}

Also, it can be checked that other roots of $F_\varepsilon(\omega)$ lie close to zero. Then a typical behavior of $F(\omega)$ is presented in Fig.~\ref{Fplot} in coordinates $(\omega, \nu_l)$ and $(\varepsilon, \nu_l)$ with $w = 8$. One can see that $F(\omega)$ is positive everywhere and tends to zero when its parameters take extreme values. Therefore, $F(\omega)$ is positive. Accordingly, $\bar{\Theta}_{kl}(t,\nu)$ is positive definite.

It worth mentioning that this proof is not 100\% rigorous in its last part, and is relying more on intuitive and practical arguments. Nevertheless, the value of $w$ can always be chosen in a way that makes $F(\omega)$ positive. The exact value of $w$ doesn't influence our final result in \eqref{volt} because the integral on $\nu$ in \eqref{volt} of the Dirac Delta function vanishes at any finite value of the multiplier $w$.

\end{document}